\documentclass{article}

\usepackage{arxiv}

\usepackage[utf8]{inputenc} % allow utf-8 input
\usepackage[T1]{fontenc}    % use 8-bit T1 fonts
\usepackage{hyperref}       % hyperlinks
\usepackage{url}            % simple URL typesetting
\usepackage{booktabs}       % professional-quality tables
\usepackage{amsfonts}       % blackboard math symbols
\usepackage{nicefrac}       % compact symbols for 1/2, etc.
\usepackage{microtype}      % microtypography
\usepackage{lipsum}
\usepackage{graphicx}
\graphicspath{ {./images/} }

\usepackage{times}
\usepackage{soul}
\usepackage{url}
\usepackage[small]{caption}
\usepackage{graphicx}
\usepackage{amsmath}
\usepackage{amsthm}
\usepackage{booktabs}
\usepackage{pifont}
\usepackage[linesnumbered]{algorithm2e}

\newtheorem{example}{Example}

\newtheorem{definition}{Definition}
\newtheorem{proposition}{Proposition}

\usepackage[misc]{ifsym} % Required for the \Letter symbol

\usepackage{color}
\usepackage[table]{xcolor}
\usepackage{url}

\definecolor{lightred}{RGB}{255,230,230}
\definecolor{lightgreen}{RGB}{230,255,230}
\definecolor{darkgreen}{RGB}{200,255,200}
\definecolor{lightgrey}{RGB}{245,245,245}

\newcommand{\cmark}{\ding{51}}%
\newcommand{\xmark}{\ding{55}}%

\newcommand{\yesmark}{\cellcolor{lightgreen}\cmark}
\newcommand{\nomark}{\cellcolor{lightred}\xmark}

\usepackage{amssymb}
\newcommand{\twod}{\ensuremath{\mathbb{B}}}
\newcommand{\threed}{\ensuremath{\mathbb{B}_{\star}}}

\newcommand{\acceptcon}{\ensuremath{\varphi}}

\newcommand{\tval}{\ensuremath{1}}
\newcommand{\fval}{\ensuremath{0}}
\newcommand{\uval}{\ensuremath{\star}}

\newcommand{\inorder}{\ensuremath{\leq_{i}}}
\newcommand{\inorderst}{\ensuremath{<_{i}}}

\newcommand{\charoper}{\ensuremath{\Gamma_{D}}}

\newcolumntype{L}[1]{>{\raggedright\let\newline\\\arraybackslash\hspace{0pt}}m{#1}}
\newcolumntype{C}[1]{>{\centering\let\newline\\\arraybackslash\hspace{0pt}}m{#1}}
\newcolumntype{R}[1]{>{\raggedleft\let\newline\\\arraybackslash\hspace{0pt}}m{#1}}

\newcommand{\smallcell}[1]{{\scriptsize #1}}

\usepackage{xspace}
\newcommand{\toolname}{\texttt{BAss}\xspace}
\newcommand{\adfbdd}{\texttt{adf-bdd}\xspace}
\newcommand{\kadf}{\texttt{k++adf}\xspace}
\newcommand{\tsconj}{\texttt{ts-conj}\xspace}
\newcommand{\fasp}{\texttt{fASP}\xspace}
\newcommand{\biolqm}{\texttt{bioLQM}\xspace}
\newcommand{\godiamond}{\texttt{goDiamond}\xspace}
\newcommand{\yadf}{\texttt{yadf}\xspace}
\newcommand{\mpbn}{\texttt{mpbn}\xspace}
\newcommand{\pybn}{\texttt{pyboolnet}\xspace}
\newcommand{\aeon}{\texttt{aeon}\xspace}
\newcommand{\ruddy}{\texttt{ruddy}\xspace}
\newcommand{\he}{\texttt{H.E.}\xspace}

\newcommand{\ifftext}{\text{iff}\xspace}
\newcommand{\wrt}{\text{w.r.t.}\xspace}

\newcommand{\fulsym}{\text{fully-symbolic}\xspace}

\newcommand{\threeint}{\text{3-valued interpretation}\xspace}
\newcommand{\twoint}{\text{2-valued interpretation}\xspace}
\newcommand{\threeints}{\text{3-valued interpretations}\xspace}
\newcommand{\twoints}{\text{2-valued interpretations}\xspace}

\newcommand{\grint}{\text{grounded interpretation}\xspace}

\newcommand{\twomodel}{\text{2-valued model}\xspace}
\newcommand{\stbmodel}{\text{stable model}\xspace}

\newcommand{\adints}{\text{admissible interpretations}\xspace}
\newcommand{\coints}{\text{complete interpretations}\xspace}

\newcommand{\prints}{\text{preferred interpretations}\xspace}
\newcommand{\twomodels}{\text{2-valued models}\xspace}
\newcommand{\stbmodels}{\text{stable models}\xspace}

\newcommand{\adset}[1]{\texttt{ad}(#1)\xspace}
\newcommand{\grset}[1]{\texttt{gr}(#1)\xspace}
\newcommand{\coset}[1]{\texttt{co}(#1)\xspace}
\newcommand{\prset}[1]{\texttt{pr}(#1)\xspace}
\newcommand{\twomset}[1]{\texttt{2m}(#1)\xspace}
\newcommand{\stbset}[1]{\texttt{stb}(#1)\xspace}

\title{BAss: Symbolic Reasoning in Abstract Dialectical Frameworks}

\author{%
	Samuel Pastva \\
	Faculty of Informatics, Masaryk University, Brno, Czechia \\
	\texttt{xpastva@fi.muni.cz} \\
	\And
	Van-Giang Trinh \\
	Faculty of Computer Science and Engineering, \\
	Ho Chi Minh City University of Technology (HCMUT), VNU-HCM, \\
	Ho Chi Minh City, Vietnam\\
	\texttt{van-giang.trinh@hcmut.edu.vn}
}

\begin{document}
	
	\maketitle
	
	\begin{abstract}
		We present \toolname{} (BDD-based ADF symbolic solver), a novel analysis tool for \textit{Abstract Dialectical Frameworks} (ADFs) based on \textit{Binary Decision Diagrams} (BDDs).
		It supports the \fulsym computation of all \textit{admissible}, \textit{complete}, and \textit{preferred interpretations}, as well as \textit{2-valued} and \textit{stable models} of an ADF.
		Our approach is inspired by the recently discovered equivalence between \textit{Boolean Networks} (BNs) and ADFs by Heyninck et al. (2024) and Azpeitia et al. (2024), significantly extending the capabilities of current BDD-based tools \biolqm, \aeon, and \adfbdd.
		We conducted experiments on a large-scale collection of real-world models from both the BN and ADF communities.
		Our results show that \toolname{} dramatically outperforms previous BDD-based tools and is competitive with (even significantly better in some cases) state-of-the-art SAT/ASP-based methods, particularly in scenarios involving large solution spaces.
		Notably, \toolname{} is able to enumerate all fixed points or minimal trap spaces of certain biological networks beyond the reach of existing tools, thereby enabling new analyses and case studies in systems biology.
		These results highlight the practical relevance of symbolic reasoning for complex real-world applications, particularly in systems biology and formal argumentation.
	\end{abstract}
	
	\section{Introduction}
	\label{sec:introduction}
	
	%%% AFs
	Argumentation has emerged as a central area in AI, offering formal models for reasoning with conflicting information and justifications~\cite{BCD2007,ABGHPRST2017}.
	A foundational approach in this field is Dung's \textit{Abstract Argumentation Frameworks} (AFs)~\cite{Dung1995}, which represent arguments as abstract entities and define their interactions via a binary attack relation. 
	Despite their simplicity, AFs capture a wide range of nonmonotonic reasoning patterns and have inspired extensive research on semantics, algorithms, and complexity~\cite{CDGWW2015}.
	Their computational appeal is reflected in the development of numerous solvers and the organization of the \textit{International Competition on Computational Models of Argumentation} (ICCMA)~\cite{TV2017}, where SAT-based approaches have shown strong empirical performance~\cite{NJ2020}.
	
	While AFs have been successful, their conceptual simplicity---as directed graphs of abstract arguments and attacks---can be too restrictive for modeling complex dependencies~\cite{Strass2013}.
	\emph{Abstract Dialectical Frameworks} (ADFs) address this by associating each argument with a propositional \emph{acceptance condition}~\cite{BW2010,BSEWW2013}, allowing edges to represent arbitrary logical relationships rather than just attacks.
	This enables ADFs to subsume most classical argumentation semantics, including \emph{complete}, \emph{preferred}, and \emph{stable} semantics (the latter inspired by the stable model semantics of logic programs~\cite{GL1988}).
	Interpretations assign arguments one of three truth values (true, false, or unknown), and semantics is defined via a characteristic operator.
	ADFs thus form a highly expressive framework for argumentation, rule-based reasoning, and logic~\cite{BSEWW2013,BESWW2017}, with applications in legal reasoning~\cite{AAB2016}, dialog systems~\cite{Neugebauer2017}, text exploration~\cite{CV2016}, and defeasible theories~\cite{Strass2014}.
	However, reasoning in ADFs is computationally hard: many problems are NP- or coNP-complete, and preferred or stable semantics reaches \(\Sigma^{\text{P}}_2\)-completeness and beyond~\cite{SW2015,LMNWW2022}.
	This motivates ongoing efforts to develop efficient and scalable ADF solvers~\cite{LMNWW2022,EGRW2022}.
	
	%%% ADF-BN equivalence
	\paragraph{ADF-BN equivalence}
	
	Recent work by~\cite{HMJ2024,ASDO2024} has established a formal correspondence between ADFs and \textit{Boolean Networks} (BNs). As a straightforward yet powerful mathematical formalism, BNs have been extensively applied across science and engineering, particularly within systems biology~\cite{Thomas1973,rozum2023boolean}. Formally, a BN is a discrete dynamical system comprising \(n\) Boolean variables and \(n\) corresponding Boolean functions, which dictate discrete state updates according to a specified update scheme. Traditionally, the dynamical behavior of BNs has been analyzed through the lens of \emph{attractors} (minimal inescapable sets of states)~\cite{WSA2012}. More recently, the concept of a \emph{trap space} (an inescapable hypercube of states) was introduced~\cite{HAH2015} and has since emerged as the central focus in BN analysis~\cite{TBPS2024,TKPR2025}.
	
	The aforementioned studies~\cite{HMJ2024,ASDO2024} demonstrate that ADFs inherently function as BNs when their acceptance conditions are translated into Boolean functions. This translation reveals two formal bijections: \twomodels correspond to \emph{fixed points} (singleton attractors), while \prints map to \emph{minimal} trap spaces. Building upon this equivalence, it follows that admissible interpretations of ADFs represent \emph{all} BN trap spaces, and complete interpretations correspond to \emph{fully percolated} trap spaces~\cite{TKPR2025}. 
	
	This theoretical connection effectively bridges two previously disjoint fields---computational argumentation and systems biology---facilitating the reuse of algorithms and software tools~\cite{ASDO2024}. Consequently, many ADF tools can be directly applied to BNs, and vice versa. Given their shared syntax and our primary focus on ADFs, this paper omits formal preliminaries for BNs; readers are referred to~\cite{rozum2023boolean} for comprehensive details.

	\paragraph{Enumeration problems in ADFs and BNs}
	This paper focuses on the enumeration of ADF interpretations or models under various semantics.
	Whereas decision problems have been widely studied in the literature~\cite{LMNWW2022,TBPS2024}, enumeration plays a crucial role in practical applications---such as model analysis, explanation, and verification---where exploring the entire solution space is essential~\cite{EGRW2022,EG2023a,EG2023b,TBS2023CP}.
	Moreover, enumeration is often the computational bottleneck in ADF and BN reasoning tools~\cite{EGRW2022,TBS2023CP,TBPS2024,ASDO2024}, where the number of solutions can be exponential and naive generation strategies quickly become infeasible.
	By targeting enumeration as a first-class objective, we aim to push the boundaries of scalability and completeness in ADF and BN solving.
	
	%%% bass
	Most existing ADF solvers rely on SAT or \textit{Answer Set Programming} (ASP) encodings, which---thanks to highly optimized modern solvers---provide strong performance on many benchmarks.
	Symbolic methods based on \textit{Binary Decision Diagrams} (BDDs) have also been explored, most notably in \adfbdd~\cite{EGRW2022}, showing that BDDs can exploit structural regularities that are often missed by SAT/ASP-based approaches.
	In parallel, the BN community has developed efficient symbolic tools for fixed-point and trap space analysis, such as \biolqm~\cite{Naldi2018}, and for attractor analysis such as \aeon~\cite{BBKPS2020}. 
	These address the solution space explosion problem common in BNs due to the large number of input variables, and can be adapted for use on ADFs.
	
	While BDDs offer a promising foundation for symbolic reasoning in ADFs, enabling both efficient analysis and compact representation~\cite{Strass2018,EGRW2022}, current tools such as \adfbdd lack support for the preferred semantics and face scalability challenges~\cite{EGRW2022}.
	Motivated by the recent ADF–BN equivalence, we introduce \toolname (BDD-based ADF symbolic solver), a new BDD-based solver for ADFs.
	It represents each acceptance condition as a BDD, following the idea of \adfbdd~\cite{EGRW2022}, but significantly extends the core BDD encoding of \biolqm~\cite{Naldi2018} and applies the parametrized principle of the BN attractor analysis tool \aeon~\cite{BBKPS2020} to achieve fully symbolic analysis and improved efficiency.
	
	\paragraph{Contribution}
	\toolname adapts existing symbolic techniques towards the computation of all \emph{admissible}, \emph{complete}, and \emph{preferred} interpretations, as well as 
	\emph{2-valued} and \emph{stable} models.
	Notably, we introduce novel encoding approaches and optimization strategies, which enable \toolname to be the first BDD-based ADF solver to support \fulsym reasoning on the preferred and stable semantics.
	
	We performed an extensive experimental evaluation over 1,200+ benchmark instances from both the BN and ADF domains.
	The results show that \toolname dramatically outperforms current BDD-based methods (\adfbdd, \biolqm, and \aeon) and is competitive with leading SAT/ASP-based methods, such as \kadf, \godiamond, \yadf, \mpbn, \fasp, and \tsconj---often outperforming them on instances that require analysis of large solution sets.
	
	In particular, \toolname also brings concrete benefits to the analysis of biological regulatory networks.
	Thanks to the ADF-BN equivalence, it can enumerate all fixed points or minimal trap spaces of certain Boolean networks that remain unsolvable by existing BN tools.
	This opens the door to new case studies in systems biology (e.g., model-based phenotype discovery), where the complete characterization of dynamical behaviors is often critical.
	
	%%% Paper structure
	The remainder of the paper is organized as follows.
	Section~\ref{sec:related-work} reviews related work on ADF and BN solvers.
	Section~\ref{sec:preliminaries} recalls basic notions of ADFs.
	Section~\ref{sec:analysis-methods} presents the BDD-based methods over different ADF semantics.
	Section~\ref{sec:evaluation} describes our experiments and evaluation of our method.
	It also discusses the biological applicability of our method.
	Section~\ref{sec:conclusion} concludes the paper.
	
	\section{Related Work}
	\label{sec:related-work}
	
	Research on ADF solvers has mainly followed two paradigms: 
	(1) SAT and ASP-based encodings, and (2) symbolic BDD representations. 
	
	%%% SAT/ASP based approaches to ADFs
	Among the SAT-based approaches, \kadf~\cite{LMNWW2022} is one of the most prominent solvers.
	It translates ADF reasoning tasks into propositional satisfiability and supports enumeration under several semantics (admissible, complete, preferred, grounded, stable).
	It possesses efficient SAT encodings, evidenced by its strong empirical performance on models of the ICCMA competitions~\cite{LMNWW2022}.
	Other systems such as \godiamond~\cite{SE2017} and \yadf~\cite{BDHLW2020} use ASP encodings, often excelling on small and medium-sized benchmarks but struggling with scalability~\cite{LMNWW2022,EGRW2022}.
	
	%%% BDD-based approaches to ADFs
	The first BDD-based solver, \adfbdd~\cite{EGRW2022}, represents each acceptance condition as a reduced ordered BDD, enabling the efficient application of the characteristic operator and model evaluation.
	It supports grounded, complete, and stable semantics, and implements fixed-point iteration and model enumeration via symbolic restriction and heuristic-guided search; however, the preferred semantics has not been explored yet.
	While \adfbdd outperforms ASP-based solvers like \yadf~\cite{BDHLW2020} on stable semantics and matches SAT-based tools like \kadf~\cite{LMNWW2022} on grounded semantics, it remains behind the fastest SAT solvers on complete and stable semantics tasks.
	The tool is implemented in Rust, is open-source, and offers both command-line and web interfaces~\cite{EG2023a}.
	In addition, BDDs enable straightforward graphical visualization of the underlying ADFs and their solutions in \adfbdd.
	This showcases the potential of BDDs towards better explainability and debugging of ADFs~\cite{EG2023b}.
	
	%%% Approaches to BNs
	In the BN community, tools such as \biolqm~\cite{Naldi2018} and \pybn~\cite{HAH2015,KSS2017} support the symbolic computation of fixed points and (minimal or maximal) trap spaces using BDDs or ASP encodings.
	\pybn also implements an \textit{Integer Linear Programming} (ILP) encoding, and shows that this encoding is surpassed by the ASP one~\cite{HAH2015}.
	However, the ASP and ILP encodings of \pybn require the computation of prime implicants for each Boolean function, which is a bottleneck since computing even a single prime implicant is NP-hard and the total number of prime implicants can be exponential in the number of function inputs~\cite{CM1978}.
	\biolqm avoids the prime-implicant computation as it characterizes the set of generic trap spaces of a BN using a BDD, then filters this set to get the set of all minimal trap spaces. 
	%By this approach, it requires the computation of all solutions, whereas the methods based on ASP, ILP, or SAT can start enumerating them as they are found.
	However, its main issue is that the number of such generic BN trap spaces may be significantly larger than the number of desirable (e.g., minimal) trap spaces~\cite{Naldi2018,TBS2023}.
	This BDD-based filtering approach has also been implemented in the tool \aeon~\cite{BBHPSS2022} with specific optimizations targeting large solution spaces arising in networks with many input nodes.
	Recently, an efficient SAT-based encoding, called \he, has been proposed for BNs, but it is limited to fixed points~\cite{HSBMBT2025}.
	
	Subsequent ASP encodings~\cite{PKCH2020,VML2024,TBS2023} that were proposed to overcome the prime-implicant bottleneck can still suffer from scalability and efficiency issues. 
	For very large and complex models, these still require the {\em disjunctive normal forms} of all the Boolean functions (and their negations) of the original BN.
	For fixed points, the ASP encoding \fasp~\cite{TBS2023CP} uses a {\em negation normal form} for each Boolean function, which is much more efficient to obtain.
	This encoding was later generalized for minimal trap spaces in \tsconj~\cite{TBPS2024}; however, when dealing with {\em unsafe formulas} (quite rare in real-world models), it still requires a disjunctive normal form to ensure correctness.
	Notably, in connection to ADFs, the experiments of~\cite{ASDO2024} show that \tsconj outperforms \kadf on large-scale BN and ADF models.
	
	\toolname builds on the strengths of both symbolic and logic-based approaches by integrating BDD-based reasoning with insights from the BN domain.
	Unlike \adfbdd, which applies BDDs directly to ADF semantics, \toolname leverages the BN-ADF correspondence to import advanced computation techniques from tools like \aeon~\cite{BBKPS2020} (dealing with the case of many input nodes) or \biolqm.
	
	\section{Preliminaries}
	\label{sec:preliminaries}
	
	In this section, we briefly recall the syntax and semantics of Abstract Dialectical Frameworks (ADFs).
	
	Let \(\threed = \{\fval, \tval, \uval\}\) be the 3-valued domain where \(\fval\), \(\tval\), and \(\uval\) denote the \emph{false}, \emph{true}, and \emph{unknown} values, respectively.
	Then \(\twod = \{\fval, \tval\}\) is the 2-valued (Boolean) domain.
	The logical connectives used in this paper include \(\land\) (\emph{and}), \(\lor\) (\emph{or}), \(\neg\) (\emph{negation}), \(\Rightarrow\) (\emph{implication}), and \(\Leftrightarrow\) (\emph{equivalence}). 
	
	\begin{definition}[\cite{BW2010}]
		An ADF is a tuple \(D := (S, C)\) where \(S\) is a \emph{finite} set of arguments and \(C := \{\acceptcon_s\}_{s \in S}\) consists of acceptance conditions (one for each argument in $S$), corresponding to propositional formulas \(\acceptcon_s := s' \in S \mid \fval \mid \tval \mid \neg \acceptcon \mid (\acceptcon \circ \acceptcon)\), such that $\circ$ can be one of the binary operators $\land, \lor, \Rightarrow, \Leftrightarrow$.
	\end{definition}
	
	The original definition of~\cite{BW2010} also assumes a link relation of dependencies between arguments to mirror the attack relation of AFs~\cite{Dung1995}. However, this relation can be inferred from the acceptance conditions and is therefore commonly omitted in recent literature.
	
	While the semantics of AFs is based on sets of arguments~\cite{Dung1995}, that of ADFs is based on \threeints~\cite{BSEWW2013}.
	A 3-valued \emph{interpretation} is a mapping \(I \colon S \to \threed\); it becomes a \twoint if \(I(s) \neq \uval\) for each \(s \in S\).
	An interpretation represents a belief assignment over the ADF arguments, with values mapping to ``argument holds'' ($1$), ``argument does not hold'' (0), and ``unknown'' ($\star$).
	
	An \emph{information ordering} \(\inorder\) is defined as the reflexive transitive closure of \(\inorderst\) where \(\inorderst\) is an ordering on \(\threed\) as follows: \(\uval \inorderst \fval\), \(\uval \inorderst \tval\), and there is no other relation.
	Intuitively, values $0$ and $1$ carry more information than $\uval$.
	
	The information ordering extends to \threeints by \(I_1 \inorder I_2\) \ifftext \(I_1(s) \inorder I_2(s)\) for each \(s \in S\), and \(I_1 \inorderst I_2\) \ifftext \(I_1 \inorder I_2\) and there is some \(s\) such that \(I_1(s) \inorderst I_2(s)\).
	Intuitively, $I_1 \inorder I_2$ if for every $s \in S$, either $I_1(s) = \uval$, or $I_1(s) = I_2(s)$. 
	That is, $I_2$ can only differ from $I_1$ in arguments that are unknown in $I_1$.
	
	Then, the \emph{partial evaluation} \(\varphi[I]\) of \(\varphi\) \wrt \(I\) is defined by \(\varphi[I] := \varphi[s/I(s) \mid s \in S,~I(s) \in \twod]\).
	That is, \(\varphi[I]\) is the propositional formula obtained from \(\varphi\) by substituting each argument $s$ that has a known Boolean value in $I$ with this Boolean value $I(s)$.
	Note that if $I$ is a \twoint, $\varphi[I]$ always simplifies to $1$ or $0$. We then write $\models \varphi$ to denote that for every \twoint $I$, $\varphi[I]$ simplifies to $1$ (in other words, $\varphi$ is a tautology on \twoints).
	
	Finally, the \emph{characteristic operator} \(\charoper(I) = I'\) uniquely defines the \threeint \(I'\) as follows: 
	\(I'(s) = \tval\) \ifftext \( \models \varphi_s[I]\) ($\varphi_s[I]$ is a 2-valued tautology), \(I'(s) = \fval\) \ifftext \(\models \neg\varphi_s[I]\) ($\varphi_s[I]$ is a 2-valued contradiction), and \(I'(s) = \uval\) otherwise.
	We are now able to define the semantics of ADFs:
	
	\begin{definition}[\cite{BW2010}]
		Let \(D = (S, C)\) be an ADF, and \(I\) a \threeint; then $I$ is:
		\begin{itemize}
			\item \emph{Admissible} \ifftext \(I \inorder \charoper(I)\). Intuitively, any belief held by such $I$ is fully justified by the acceptance conditions of $D$.
			\item \emph{Complete} \ifftext \(I = \charoper(I)\). Fixed point of $\charoper$; includes all additional beliefs that logically follow from $I$ itself.
			\item \emph{Grounded} \ifftext \(I\) is \(\inorder\)-minimal complete interpretation. The least committed point of view admitting the maximum number of $\star$ arguments.
			\item \emph{Preferred} \ifftext \(I\) is \(\inorder\)-maximal admissible interpretation. The most informative admissible interpretation with the least number of $\star$ arguments.
			\item \emph{\twomodel} \ifftext \(I\) is complete and 2-valued. In other words, a complete interpretation with no $\star$ arguments.
		\end{itemize}
	\end{definition}
	
	Finally, we recall the definition of a \stbmodel, which is inspired by the \emph{stable model semantics} in logic programming~\cite{GL1988}.
	
	\begin{definition}[\cite{BW2010}]\label{def:stabel-model}
		Let \(D = (S, C)\) be an ADF and \(I\) be an arbitrary but fixed 2-valued interpretation of $D$.
		First, we define the \emph{reduced ADF} \(D^{I} := (S^{I}, C^{I})\) where \(S^{I} := \{s \in S \mid I(s) = \tval\}\), \(C^{I} := \{ \varphi_{s}^{I} \}_{s \in S^{I}}\) and $\varphi_{s}^{I} = \varphi_s[s'/\fval \text{~for~} s' \in S \text{~s.t.~} I(s') = 0]$.
		\(I\) is a stable model of \(D\) \ifftext \(I\) is a \twomodel of \(D\) and for the \grint \(G\) of \(D^{I}\), we have that \(I(s) = \tval\) implies \(G(s) = \tval\). Intuitively, all held beliefs in such $I$ are well-founded (i.e., free of circular reasoning). Every argument that holds is justified by an acyclic chain of support.
	\end{definition}
	
	We denote by \(\twomset{D}\) and \(\stbset{D}\) the sets of 2-valued and \stbmodels of \(D\), respectively; by \(\adset{D}\), \(\coset{D}\), \(\grset{D}\), and \(\prset{D}\) we denote the sets of admissible, complete, grounded, and preferred interpretations of \(D\), respectively.
	It has been shown that \(\stbset{D} \subseteq \twomset{D} \subseteq \prset{D} \subseteq \coset{D} \subseteq \adset{D}\), as well as \(\grset{D} \subseteq \coset{D}\), and \(|\grset{D}| = 1\)~\cite{BSEWW2013}.
	For the \twomodel semantics, the enumeration of all \twomodels is generally intractable; its decision (resp.\ counting) version is NP-complete (resp.\ \#P-complete) in general~\cite{BSEWW2013}.
	Deciding whether an ADF has a \stbmodel is \(\Sigma^{\text{P}}_2\)-complete~\cite{BSEWW2013}, but it remains an open problem to precisely classify the enumeration complexity (or counting complexity) of \stbmodels in ADFs.
	This applies to the complete and preferred semantics too.
	Overall, the enumeration problem in ADFs is very challenging, in particular for large-scale problem instances.
	
	\begin{example}
		Consider the ADF \(D = (S, C)\) with \(S = \{a, b, c\}\) and \(\acceptcon_{a} = \tval, \acceptcon_{b} = \neg a \lor c, \acceptcon_{c} = b\).
		It has five \adints:
		\begin{align*}
			I_1 &= \{a \mapsto \tval, b \mapsto \fval, c \mapsto \fval\}, I_2 = \{a \mapsto \tval, b \mapsto \tval, c \mapsto \tval\}, \\
			I_3 &= \{a \mapsto \tval, b \mapsto \uval, c \mapsto \uval\}, \\
			I_4 &= \{a \mapsto \uval, b \mapsto \uval, c \mapsto \uval\}, I_5 = \{a \mapsto \uval, b \mapsto \tval, c \mapsto \tval\}.
		\end{align*}
		Here, \(I_1\), \(I_2\), and \(I_3\) are \coints, such that $I_1$ and $I_2$ are preferred, but also \twomodels.
		The \twomodel \(I_2\) is not stable because the reduced ADF \(D^{I_2}\) where \(D^{I_2} = D\) has the \grint \(\{a \mapsto \tval, b \mapsto \uval, c \mapsto \uval\}\), which does not satisfy the last condition of Definition~\ref{def:stabel-model}.
		Only \(I_1\) is a \stbmodel of \(D\).
		\label{example:adf}
	\end{example}
	
	\section{Analysis Methods}
	\label{sec:analysis-methods}
	
	In this section, we introduce the methods used by \toolname. The primary goal of \toolname in each case is to directly characterize the result set of the reasoning problem using a BDD, providing a compact representation of the full solution set while avoiding enumeration.
	Proofs are delegated to the \emph{Supplementary Material}~\cite{pastva2026bass}.
	
	\subsection{Symbolic representation of ADFs}
	
	A reduced, ordered BDD~\cite{DBLP:journals/tc/Bryant86} is a rooted, directed, acyclic graph with two terminal nodes ($\mathbf{0}$ and $\mathbf{1}$) that encodes a Boolean function $f: \mathbb{B}^k \to \mathbb{B}$. Each non-terminal $u$ (also called a \emph{decision node}) has two successors, a positive and a negative one, denoted $\mathit{pos}(u)$ and $\mathit{neg}(u)$. Each such $u$ also has an associated decision variable $\mathit{var}(u) \in \{ 1, \ldots, k \}$ corresponding to the inputs of $f$. On every path in this graph, each decision variable must appear at most once, and the variables respect a fixed linear order. Each $u$ then represents a function $f_u: \mathbb{B}^k \to \mathbb{B}$ via the Shannon expansion:
	\begin{equation*}
		f_u(x) = (x_{\mathit{var}(u)} \land f_{\mathit{pos}(u)}(x)) \lor (\neg x_{\mathit{var}(u)} \land f_{\mathit{neg}(u)}(x))
	\end{equation*}
	
	For the terminal nodes, we assume $f_{\mathbf{0}}(x) = 0$ and $f_{\mathbf{1}}(x) = 1$. A BDD $F$ is a symbolic encoding of a Boolean function $f$ if $f = f_{\mathit{root}(F)}$. Intuitively, to \emph{evaluate} $f$ using its BDD, we start at the root and, at each decision node $u$, pick the successor based on the value of $x_{\mathit{var}(u)}$ until we reach a terminal node, which determines the output. For additional details, we refer readers to~\cite{clarke2018handbook}, which provides a comprehensive overview of the data structure and its applications in symbolic graph analysis.
	
	Assuming a fixed variable ordering, a BDD representation is canonical (i.e., a specific $f$ is always represented by the same BDD). In the following, we thus largely consider BDDs interchangeable with Boolean functions, and unless stated otherwise, each Boolean function is represented by its BDD. However, in the worst case, the size of the BDD (the number of its nodes) can be exponential in $k$ (the number of function inputs). Given BDDs $A$ and $B$, we can compute the BDD of $A \circ B$ in $\mathcal{O}(|A| \cdot |B|)$, where $\circ$ is any binary Boolean operator and $|A|$ denotes the number of nodes in $A$. While BDDs can often represent complex Boolean functions compactly, keeping the BDD size as small as possible is of utmost importance due to this quadratic complexity.
	
	BDDs allow for the existential quantification of an input:
	\begin{equation*}
		\exists x_i .\, f(x) = f(x[i/1]) \lor f(x[i/0])
	\end{equation*}
	Here, $x[i/b]$ is a copy of $x$ with the $i$-th position set to $b$. Notice that $g(x) = \exists x_i .\, f(x)$ no longer depends on its $i$-th argument. Consequently, we can equivalently interpret $g$ as a function over the domain $\mathbb{B}^{k-1}$. As we will discuss shortly, this is critical when symbolically encoding Boolean relations, because existential quantification allows us to perform projections onto specific components of the relation.
	
	\paragraph{Symbolic ADF encoding} Assume ADF $D = (S, C)$ with an arbitrary but fixed ordering of arguments $S = \{ s_1, \ldots, s_n \}$. Then, we observe that a set $X \subseteq \twod^n$ of \twoints can be represented by its  characteristic function $f_X: \twod^n \to \twod$ s.t. $f_X(x) = 1$ iff $x \in X$. In that regard, every acceptance condition $\varphi_s: \twod^n \to \twod$ also trivially has a corresponding BDD representation. To represent \threeints, we define a \emph{dual encoding} using additional Boolean variables $(s^\top, s^\bot)$ defined for each $s \in S$, such that $(1,0) \equiv 1$, $(0,1) \equiv 0$, and $(1,1) \equiv \star$. Here, $(0,0)$ is an invalid combination which we exclude by requiring $s^\top \lor s^\bot$ in all symbolic representations of \threeints. Using this encoding, a set $Y \subseteq \threed^n$ can be represented as a function $f_Y: \mathbb{B}^{2n} \to \mathbb{B}$. Finally, an arbitrary relation $R$ between 2-valued and 3-valued interpretations can be encoded via a BDD using a characteristic function $f_R: \mathbb{B}^n \times \mathbb{B}^{2n} \to \mathbb{B}$ (or $f_R: \mathbb{B}^{3n} \to \mathbb{B}$). For the sake of readability, we will consider interpretations to be interchangeable with their Boolean encodings: when $I$ is an interpretation, we can write $f(I)$ to denote ``function $f$ evaluated on the Boolean encoding of $I$'', as long as $f$ is of the correct type (i.e., $f: \mathbb{B}^n \to \mathbb{B}$ for 2-valued, and $f: \mathbb{B}^{2n} \to \mathbb{B}$ for 3-valued interpretations).
	
	\paragraph{On multi-valued decision diagrams} Note that three-valued decision diagrams (DDs) could be also used to encode sets of 3-valued interpretations. Assuming $s^\top$ and $s^\bot$ follow each other in the variable ordering, a single three-valued node could replace up to three standard BDD nodes, resulting in a constant-factor DD size reduction. Such an approach is always a trade-off between DD size and internal overhead of the implementation, as considering additional levels complicates the decision node implementation. Here, we chose the BDD encoding mainly due to more mature and widespread support for BDD implementations compared to multi-valued DDs. Do note that this choice does not affect the asymptotic complexity of our algorithms in terms of the number of symbolic DD operations.
	
	\paragraph{Symbolic computation of characteristic operator} In addition to acceptance conditions and sets of interpretations, dual encoding allows us to directly represent the characteristic operator $\charoper{}$. Specifically, let $f: \twod^n \to \twod$ be a function over the arguments of $D$. Then, let us define:
	
	\begin{equation}
		\label{eq:dual-function}
		\begin{aligned}
			dual(f) &= \exists s_1, \ldots, s_n.~ \bigg(f(s_1, \ldots, s_n)~\land \\
			&\big(\bigwedge_{s_i \in S} (s_i \Rightarrow s_i^\top) \land (\neg s_i \Rightarrow s_i^{\bot})\big)\bigg)\\    
		\end{aligned}
	\end{equation}
	
	Here, note that while the whole expression uses $s_i^\top$, $s_i^\bot$ as well as $s_i$, due to the existential quantification, $dual(f)$ only depends on $s_i^\top$ and $s_i^\bot$. As such, we can treat $dual(f)$ as a function $dual(f): \twod^{2n} \to \twod$ which is satisfied by a 3-valued interpretation $I \in \threed^n$ \ifftext there exists a 2-valued interpretation $I' \in \twod^n$ s.t. $I \inorder I'$ and $f(I') = 1$. In other words, $dual(f)$ is satisfied by every \threeint that can be narrowed down to a \twoint satisfying $f$. With this in mind, let us define $(\varphi_s^\top, \varphi_s^\bot)$ as $(dual(\varphi_s), dual(\neg \varphi_s))$ for every $s \in S$. Now, let $I$ be a 3-valued interpretation and let $(a, b) =(\varphi_s^\top(I), \varphi_s^\bot(I))$. It then trivially follows that $\charoper(I) = 1$ iff $(a,b) = (1,0)$, $\charoper(I) = 0$ iff $(a,b) = (0,1)$, and $\charoper(I) = \star$ iff $(a, b) = (1,1)$. In other words, the pairs of functions $(\varphi_s^\top, \varphi_s^\bot)$ together define the dual encoding of $\charoper: \threed^n \to \threed^n$.
	
	\begin{example}
		Recall the ADF from Example~\ref{example:adf}, where $\varphi_a = 1$, $\varphi_b = \neg a \lor c$, and $\varphi_c = b$. The dual encoding of $\charoper{}$ is then defined by $(\varphi_a^\top, \varphi_a^\bot) = (1, 0)$, $(\varphi_b^\top, \varphi_b^\bot) = (a^\bot \lor c^\top, a^\top \land c^\bot)$, and $(\varphi_c^\top, \varphi_c^\bot) = (b^\top, b^\bot)$. Using this representation of $\charoper{}$, it is easy to verify that $I_5 = \{ a \mapsto \star, b \mapsto 1, c \mapsto 1 \}$ is not complete, but is admissible: $(\varphi_a^\top(I_5), \varphi_a^\bot(I_5)) = (1, 0) \equiv 1 \not\equiv \star$ (but $\star \inorder 1$), $(\varphi_b^\top(I_5), \varphi_b^\bot(I_5)) = (1, 0) \equiv 1$, and $(\varphi_c^\top(I_5), \varphi_c^\bot(I_5)) = (1, 0) \equiv 1$.
		\label{example:dual-encoding}
	\end{example}
	
	This dual encoding of acceptance conditions is widely used in BN tools (\aeon, \tsconj, \mpbn, and \biolqm all use it to some extent). However, in all of these tools, the translation from $f$ to $dual(f)$ is done \emph{syntactically}, typically by constructing the DNF representation of $f$ and then substituting positive literals for $s^\top$ and negative literals for $s^\bot$. Tools like \mpbn and \tsconj try to avoid this translation in certain special cases, but in the worst case default to DNF. This translation of $\varphi$ into DNF is often a bottleneck, as it can exponentially increase the size of the formula~\cite{TBPS2024}. While conceptually simple, we believe to be the first in this context to define this \emph{direct} propositional mapping (Equation~\ref{eq:dual-function}). This allows us to construct the BDDs of $\varphi^\top$ and $\varphi^\bot$ fully symbolically by transforming $\varphi$ instead of translating it into DNF. As an initial experiment, we implemented this singular optimization into the tool \aeon, which is included in the final evaluation.
	
	\subsection{Computation of 2-valued models}
	
	Given ADF $D = (S, C)$, the symbolic characterization of \twomodels (i.e. the set $\twomset{D}$) is straightforward. The following Boolean function is satisfied exactly by all \twomodels of $D$:
	\begin{equation}
		f_{\twomset{D}} = \bigwedge_{s_i \in S} s_i \Leftrightarrow \varphi_{s_i}
	\end{equation}
	
	This characterization is a well-known fact both in the ADF and BN communities and its correctness trivially follows from the observation that $\charoper{(I)}(s) = \varphi_s(I)$ when $I$ is a \twoint. Furthermore, the BDD of this formula can be constructed using just $\mathcal{O}(|S|)$ BDD operations. 
	
	\begin{algorithm}[t]
		\KwData{set of BDDs $Q$ (e.g. $Q = \{ a \Leftrightarrow \varphi_a \mid a \in S \}$ for $\twomset{D}$)}
		$R \gets \top$\;
		\While{$Q \not= \emptyset$}{
			$O \gets X \in Q \text{ s.t. BDD size } |X \land R| \text{ is minimal}$\;
			$R \gets R \land O$\;
			$Q \gets Q \setminus \{ O \}$\;
		}
		\textbf{return} $R$\;
		\caption{Greedy optimization algorithm for performing $n$-ary conjunction of BDDs.}\label{alg:greedy-merge}
	\end{algorithm}
	
	\paragraph{Conjunction optimization} Here, the intermediate results of applying the conjunction operator can be often much larger than the final BDD of $f_{\twomset{D}}$. To mitigate this problem, we utilize a simple greedy optimization scheme (Algorithm~\ref{alg:greedy-merge}), which increases the number of BDD operations to $\mathcal{O}(|S|^2)$, but is in practice often much faster because it can maintain dramatically smaller intermediate BDD sizes. Usage of such optimization schemes is fairly well known in the BDD community. However, we are not aware of other BDD-based tools in this area performing such optimization, thus we feel it should be highlighted. In the following, we use Algorithm~\ref{alg:greedy-merge} to construct the BDD of every similar $n$-ary conjunction operator.
	
	\subsection{Computation of admissible and complete interpretations}
	
	Using the dual encoding defined at the beginning of this section, the characterization of admissible and complete interpretations (i.e. the sets $\adset{D}$ and $\coset{D}$) is similar to that of \twomodels:
	
	\begin{align}
		f_{\adset{D}} =& \bigwedge_{s_i \in S} ((\varphi_{s_i}^\top \Rightarrow s_i^\top) \land (\varphi_{s_i}^\bot \Rightarrow s_i^\bot)) \\
		f_{\coset{D}} =& \bigwedge_{s_i \in S} ((\varphi_{s_i}^\top \Rightarrow s_i^\top) \land (\varphi_{s_i}^\bot \Rightarrow s_i^\bot) ~\land \\\nonumber
		&((s_i^\top \land s_i^\bot) \Rightarrow (\varphi_{s_i}^\top \land \varphi_{s_i}^\bot)))
	\end{align}
	
	\vspace{1pt}
	
	\begin{proposition}\label{prop:ad-co-char}
		The above BDD-based characterizations of $\adset{D}$ and $\coset{D}$ are correct, in the sense that the models of $f_{\adset{D}}$ (resp.\ $f_{\coset{D}}$) correspond exactly to the admissible (resp.\ complete) interpretations of the given ADF $D$.
	\end{proposition}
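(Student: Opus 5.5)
The plan is a pointwise, argument-by-argument comparison. Both $f_{\adset{D}}$ and $f_{\coset{D}}$ are conjunctions over $s_i \in S$, and both the ordering $\inorder$ and the equality $I = \charoper(I)$ are defined componentwise. So it suffices to fix a valid 3-valued interpretation $I$ (encoded with $s^\top \lor s^\bot$) and a single argument $s$, and to show two things. First, the conjunct $(\varphi_s^\top \Rightarrow s^\top) \land (\varphi_s^\bot \Rightarrow s^\bot)$ holds on $I$ iff $I(s) \inorder \charoper(I)(s)$. Second, adding the conjunct $(s^\top \land s^\bot) \Rightarrow (\varphi_s^\top \land \varphi_s^\bot)$ makes it hold iff $I(s) = \charoper(I)(s)$.

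The key tool is the fact established right after Equation~\eqref{eq:dual-function}. Writing $(a,b) = (\varphi_s^\top(I), \varphi_s^\bot(I))$, the pair $(a,b)$ is the dual encoding of $\charoper(I)(s)$: $(1,0)$ for $1$, $(0,1)$ for $0$, and $(1,1)$ for $\star$. I would briefly re-justify this step, since it carries all the content. By definition of $dual$, $a = 1$ iff some 2-valued refinement $I' \geq_i I$ satisfies $\varphi_s$. That happens iff $\varphi_s[I]$ is not a 2-valued contradiction, because the 2-valued refinements of $I$ are exactly the assignments to the unknown arguments in $\varphi_s[I]$. Symmetrically, $b = 1$ iff $\varphi_s[I]$ is not a tautology. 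The case $(0,0)$ cannot occur, because $I$ has at least one refinement.

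With this in hand, I would run a case analysis on the encoded value of $I(s)$, writing $(x,y) = (s^\top, s^\bot)$. The admissibility conjunct says $a \le x$ and $b \le y$.

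\emph{Case $I(s) = \star$, i.e.\ $(x,y) = (1,1)$.} The admissibility conjunct is always true, matching $\star \inorder$ anything. The completeness conjunct forces $(a,b) = (1,1)$, i.e.\ $\charoper(I)(s) = \star$.

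\emph{Case $I(s) = 1$, i.e.\ $(x,y) = (1,0)$.} The admissibility conjunct forces $b = 0$, hence $(a,b) = (1,0)$ and $\charoper(I)(s) = 1$. This is exactly $1 \inorder \charoper(I)(s)$, and it is also equality. The extra completeness conjunct is vacuous here.

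\emph{Case $I(s) = 0$.} This is symmetric to the previous case.

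Taking the conjunction over all $s \in S$ then gives $I \inorder \charoper(I)$, respectively $I = \charoper(I)$.

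The main obstacle is not the case analysis, which is mechanical, but the encoding bookkeeping. One must make sure that models of $f_{\adset{D}}$ and $f_{\coset{D}}$ are restricted to valid encodings, i.e.\ conjoined with $\bigwedge_s (s^\top \lor s^\bot)$ as the paper stipulates for all representations of 3-valued interpretations. Without this restriction, $(0,0)$ would vacuously satisfy the admissibility conjunct whenever $(a,b) = (0,0)$. I would therefore state explicitly that correspondence is meant over valid encodings. I would also verify the semantic lemma on $dual$ carefully, namely that existential quantification over $s_i$ together with the constraints $s_i \Rightarrow s_i^\top$ and $\neg s_i \Rightarrow s_i^\bot$ ranges exactly over the 2-valued refinements of $I$. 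That lemma is the only nontrivial ingredient.
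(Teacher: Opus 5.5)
Your proposal is correct and follows essentially the same route as the paper: a per-argument check using the correspondence between $(\varphi_s^\top,\varphi_s^\bot)$ and the dual encoding of $\charoper(I)(s)$. The only difference is that the paper splits cases on the value of $\charoper(I)(s)$, whereas you split on $I(s)$. Your re-derivation of the $dual$ lemma and your explicit restriction to valid encodings (excluding $(0,0)$) are details the paper leaves implicit, and they are worth keeping.
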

	
	\subsection{Computation of preferred interpretations}
	
	For preferred interpretations, the computation is more involved, as we need to select the $\inorder$-maximal interpretations from the set $\coset{D}$. A naive, often adopted approach, is to iteratively build a set $C$ of candidate interpretations using the following method: select an interpretation $I$ from $\coset{D}$, remove all $I' \in C$ and $I' \in \coset{D}$ such that $I' \inorderst I$, then insert $I$ into $C$. Upon termination, the set $C$ contains exactly the $\inorder$-maximal interpretations. However, this procedure requires anywhere from $|\prset{D}|$ to $|\coset{D}|$ iterations to converge, which can be intractable for large solution sets. Here, we instead present a symbolic algorithm which avoids this bottleneck, requiring only $|S|$ iterations.
	
	Our method is based on the following procedures, which are individually not new, but represent a somewhat less common usage of BDDs:
	
	\begin{itemize}
		\item Given a Boolean function $f: \twod^n \ \to \twod$ represented as a BDD $B$, it is possible to compute a satisfying valuation $x_1, \ldots, x_n$ (i.e. $f(x_1, \ldots, x_n) = 1$) with the least number of positive literals (i.e. $|\{ i \mid x_i = 1 \}|$ is minimal) in time $\mathcal{O}(|B|)$. We denote this operation $\mathit{leastVal}(f)$.
		\item Given Boolean functions $f_1, \ldots, f_m$, we can build the function $f_{\#k}$ satisfied \ifftext exactly $k$ functions from this list are satisfied, i.e. $f_{\#k}(x_1, \ldots, x_n) = (k = \sum_{i=1}^{m} f_i(x_1, \ldots, x_n))$. Note that when $f_i$ are literals (i.e. $f_i = x_i$), the size and construction time of such BDD are both known to be in $\mathcal{O}(m \cdot k)$. We denote this operation $\#_k(f_1, \ldots, f_m)$.
		\item Given a Boolean function $f: \twod^n \to \twod$, we can compute (in $\mathcal{O}(n)$ BDD operations) a function $f': \twod^n \to \twod$ such that $f'(y) = 1$ iff there exists some $x$ such that $f(x) = 1$ and $\{ i \mid x_i = 1 \} \subseteq \{ i \mid y_i = 1\}$. We denote this operation $\mathit{weakening}(f)$.
	\end{itemize}
	
	In the context of dual symbolic encoding of ADFs, we can assign a more intuitive meaning to these three operations: $\mathit{leastVal}(f)$ is $\inorder$-maximal interpretation out of all interpretations satisfying $f$ (specifically an interpretation with the least number of $\star$ values). Using $f_i = s_i^\top \land s_i^\bot$, the function $\#_k(f_1, \ldots, f_n)$ encodes exactly the \threeints with $k$ arguments set to $\star$. Finally, $\mathit{weakening}(f)$ encodes the set of interpretations that ``weaken'' the interpretations of $f$ by replacing an arbitrary number of fixed values with $\star$. In other words, if $f(I) = 1$, then $\mathit{weakening}(f)(I') = 1$ for every $I' \inorder I$. Using these operations, we can then build a \fulsym optimization procedure (see Algorithm~\ref{alg:preferred}).
	
	\begin{algorithm}[t]
		\KwData{BDD representing the set $\coset{D}$}
		$\prset{D} \gets \bot$\;
		\While{\upshape $\coset{D} \not= \bot$}{
			$I \gets \mathit{leastVal}(\coset{D})$\;
			$k \gets |\{ i \mid I(s_i^\top) = 1 \land I(s_i^\bot) = 1 \}|$\;
			$\mathit{prf} \gets \coset{D} \land \#_k(s_1^\top \land s_1^\bot, \ldots, s_n^\top \land s_n^\bot)$\;
			$\prset{D} \gets \prset{D} \lor \mathit{prf}$\;
			$\coset{D} \gets \coset{D} \land \neg \mathit{weakening}(\mathit{prf})$\;
		}
		\textbf{return} $\prset{D}$\;
		\caption{Iterative symbolic algorithm to compute the preferred interpretations $\prset{D}$ based on the known complete interpretations $\coset{D}$.}\label{alg:preferred}
	\end{algorithm}
	
	\begin{proposition}\label{prop:pref-char}
		Algorithm~\ref{alg:preferred} correctly computes the set of preferred interpretations $\prset{D}$ of an ADF $D$, based on its set of complete interpretations $\coset{D}$, using symbolic operations over BDDs.
	\end{proposition}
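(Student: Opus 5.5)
The plan is to prove correctness by a loop invariant, with a termination argument alongside. Let $C_0$ be the initial value of $\coset{D}$ and $C_j$ its value after $j$ iterations. Let $P_j$ be the accumulated value of $\prset{D}$ after $j$ iterations, and let $P^\ast$ be the set of $\inorder$-maximal elements of $C_0$.

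First I establish that $P^\ast$ really is the set of preferred interpretations. Preferred means $\inorder$-maximal among admissible interpretations. I use the standard facts that every preferred interpretation is complete and that every admissible $I$ lies below some complete $J$ with $I \inorder J$. A preferred $I$ is then complete, and it is maximal in $C_0$ because $C_0 \subseteq \adset{D}$. Conversely, let $I$ be maximal in $C_0$ and suppose $I \inorderst A$ for some admissible $A$. Then $A \inorder J$ for some complete $J$, so $I \inorderst J$ with $J \in C_0$, contradicting maximality. This reduces the proposition to showing that the algorithm returns exactly $P^\ast$. Proposition~\ref{prop:ad-co-char} guarantees that the input BDD encodes $C_0$.

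The invariant I maintain has two parts. (i) $P_j \subseteq P^\ast$ and $P^\ast = P_j \cup (P^\ast \cap C_j)$. (ii) $C_j$ is a subset of $C_0$ obtained by removing only elements of $C_0$ that lie $\inorder$-below some element of $P_j$, and consequently every $\inorder$-maximal element of $C_j$ is either in $P^\ast$ or below an element of $P_j$. More usefully, $C_j$ is downward-closed relative to $C_0$ in the sense that removals only remove sets of the form $\{I' \in C_0 : I' \inorder I \text{ for some } I \in P_j\}$.

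For the inductive step, $\mathit{leastVal}$ returns $I \in C_j$ with the minimum number $k$ of $\star$ values. The key step is showing that $\mathit{prf} = C_j \cap \#_k$ is a subset of $P^\ast$. Take $J \in \mathit{prf}$ and suppose $J \inorderst J'$ with $J' \in C_0$. Then $J'$ has strictly fewer than $k$ stars. If $J' \in C_j$, this contradicts the minimality of $k$. Otherwise $J'$ was removed, so $J' \inorder K$ for some $K \in P_j$, which gives $J \inorderst K$. But then $J$ would itself have been removed by $\mathit{weakening}$ when $K$ was added, because $\mathit{weakening}$ contains everything $\inorder$ its argument; this contradicts $J \in C_j$. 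Hence $J$ is maximal in $C_0$.

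Next, removing $\mathit{weakening}(\mathit{prf})$ deletes $\mathit{prf}$ itself, since $\mathit{weakening}$ is reflexive. It deletes no other element of $P^\ast$: an element strictly below a member of $\mathit{prf}$ is not maximal, and two maximal elements are $\inorder$-comparable only if they are equal. This preserves (i) and (ii).

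Termination follows because $\mathit{prf}$ contains $I$, so $C_j$ strictly shrinks at each iteration. Moreover, $k$ strictly increases between iterations, since all elements of $C_j$ with $k$ stars are removed and the minimum star count can only grow. So there are at most $|S|+1$ iterations. At termination $C_j = \emptyset$, and (i) gives $P_j = P^\ast$.

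The step I expect to be the main obstacle is the maximality argument for $\mathit{prf}$, because it relies on removed elements never being maximal witnesses. I handle it with the weakening-closure argument above. I also need to check carefully that the semantics of $\mathit{weakening}$ in the dual encoding (flipping bits from $0$ to $1$, never producing $(0,0)$) coincides exactly with $\inorder$-downward closure.
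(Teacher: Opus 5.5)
Your proposal is correct and follows essentially the same route as the paper's proof: process star counts in increasing order, note that every remaining complete interpretation with the minimal number of $\star$ values is $\leq_{i}$-maximal, and discard the weakening of that layer, so the loop runs $\mathcal{O}(|S|)$ times. Your loop invariant makes rigorous two points the paper leaves implicit: that preferred interpretations are exactly the $\leq_{i}$-maximal complete ones, and that elements removed via $\mathit{weakening}$ can never witness non-maximality of a later layer.
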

	
	\subsection{Computation of stable models}
	
	Finally, we discuss our \fulsym method for computing stable 2-valued models. This method follows the principles established in the computation of preferred models, but extends them to the stable 2-valued semantics. To start, we recall the following proposition from literature:
	
	\begin{definition}
		Consider an ADF \(D = (S, C)\).
		Given two interpretations \(I_1\) and \(I_2\) of \(D\), we say \(I_1 \leq_t I_2\) \ifftext \(I_1(s) = \tval\) implies \(I_2(s) = \tval\) for all \(s \in S\).
	\end{definition}
	
	\begin{proposition}[Proposition 3.8 of~\cite{Strass2013}]\label{theo:two-stb}
		Consider an ADF \(D = (S, C)\).
		Given two \stbmodels \(I_1\) and \(I_2\) of \(D\), it holds that \(I_1 \leq_t I_2\) implies \(I_1 = I_2\).
	\end{proposition}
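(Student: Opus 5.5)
The plan is to show that every argument accepted in $I_2$ is already accepted in $I_1$, i.e.\ $I_2 \leq_t I_1$. Since both are 2-valued, $I_1 \leq_t I_2$ and $I_2 \leq_t I_1$ together force $I_1 = I_2$. Only part of the hypotheses is needed: $I_1$ must be a 2-valued model and $I_2$ a stable model. The argument goes through the reduced ADF $D^{I_2}$ and its grounded interpretation $G$. Because $I_2$ is stable, $G(s) = \tval$ for every $s \in S^{I_2}$, so it suffices to prove that $G(s) = \tval$ implies $I_1(s) = \tval$.

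To get at $G$, I will use the standard fact that the characteristic operator is $\inorder$-monotone. Consequently the grounded interpretation of $D^{I_2}$ is the limit of the Kleene iteration $G_0, G_1, \ldots$, where $G_0$ is the all-$\uval$ interpretation over $S^{I_2}$ and $G_{k+1} = \Gamma_{D^{I_2}}(G_k)$. This chain is finite since $S$ is finite. I will prove by induction on $k$ the following invariant for all $s \in S^{I_2}$:
\[
G_k(s) = \tval \Rightarrow I_1(s) = \tval, \qquad G_k(s) = \fval \Rightarrow I_1(s) = \fval .
\]
The base case is vacuous.

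For the inductive step, suppose $G_{k+1}(s) = \tval$. Then $\varphi_s^{I_2}[G_k]$ is a 2-valued tautology. Recall that $\varphi_s^{I_2}$ is $\varphi_s$ with every argument false in $I_2$ replaced by $\fval$. Since $I_1 \leq_t I_2$, those arguments are also false in $I_1$. By the induction hypothesis, the arguments of $S^{I_2}$ that are fixed in $G_k$ carry the same value in $I_1$. Hence evaluating $\varphi_s$ on $I_1$ is one particular 2-valued completion of $\varphi_s^{I_2}[G_k]$: it takes the substituted values, plus $I_1$'s values on the arguments left unknown. So $\varphi_s(I_1) = \tval$. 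Because $I_1$ is a 2-valued model, $I_1(s) = \varphi_s(I_1)$, which gives $I_1(s) = \tval$. The case $G_{k+1}(s) = \fval$ is symmetric, using contradiction in place of tautology.

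Passing to the limit, $G$ satisfies the invariant, and the conclusion follows as described above. The step needing the most care is the inductive step, specifically checking that $I_1$ really is a legitimate completion of the partially evaluated reduced formula. This is the only place the assumption $I_1 \leq_t I_2$ is used: it guarantees that the arguments the reduct sets to $\fval$ agree with $I_1$. The invariant must also track both $\tval$ and $\fval$ values, since the iteration may fix arguments to $\fval$ and these feed into later steps.
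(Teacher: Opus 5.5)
Your proof is correct. The paper gives no proof of its own here: the statement is imported from Strass (2013), and the appendix only restates it. Your proposal therefore supplies a self-contained argument where the paper defers to the literature.

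Your route is elementary. It uses only the notions recalled in the preliminaries (the reduct, the characteristic operator, the grounded interpretation) plus the standard monotonicity/Kleene fact. Your induction is in effect an inline proof that the least fixed point of $\Gamma_{D^{I_2}}$ lies below one particular fixed point, namely $I_1|_{S^{I_2}}$. You could shorten it accordingly:
\begin{itemize}
\item Since $I_1$ is $\fval$ outside $S^{I_2}$, you have $\varphi_s^{I_2}(I_1|_{S^{I_2}}) = \varphi_s(I_1) = I_1(s)$ for every $s \in S^{I_2}$.
\item So $I_1|_{S^{I_2}}$ is a 2-valued, hence complete, model of $D^{I_2}$.
\item Then $G \inorder I_1|_{S^{I_2}}$ because the grounded interpretation is the least complete one.
\end{itemize}

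More importantly, as you observe, your argument only needs $I_1$ to be a 2-valued model. It therefore proves the stronger fact that every stable model is $\leq_t$-minimal among \emph{all} 2-valued models. That stronger form is what the paper's informal gloss relies on, and so does the pre-filtering loop of Algorithm~\ref{alg:stable}, which keeps only $\leq_t$-minimal 2-valued models. The antichain statement as literally quoted would not by itself rule out a stable model being discarded because some non-stable 2-valued model lies strictly below it in $\leq_t$.
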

	
	Intuitively, this proposition states that \stbmodels are a subset of 2-valued models minimal in terms of their true arguments (that is, w.r.t. $\leq_t$). It is not necessarily true that every such minimal \twomodel is a \stbmodel; we still have to account for the grounded interpretation of the reduced ADF. However, we can speed up the \stbmodel search significantly by pre-selecting \twomodels minimal w.r.t. $\leq_t$. This provides an advantage compared to tools such as \kadf, which computes \stbmodels by checking each \twomodel. Furthermore, to search for such minimal models, we can adapt the symbolic optimization method that we propose for preferred interpretations.
	
	To complete the procedure, we need to compute the grounded interpretation of the reduced ADF corresponding to each candidate \twomodel. To do this, we combine the direct and dual encoding of ADFs within one BDD, allowing us to represent a relation between \twomodels and their corresponding \threeints in the reduced ADF. Using this combined representation we can compute grounded models $G$ corresponding to \twomodels $I$ and only select those combinations where $I(s) = 1$ implies $G(s) = 1$. As the final step, we project our symbolic representation back to the direct encoding, obtaining the final \stbmodels. The complete procedure is presented in Algorithm~\ref{alg:stable}.
	
	\begin{algorithm}[t]
		\SetKwRepeat{Fix}{do}{until fixed-point}
		\KwData{BDD representing the set $\twomset{D}$}
		$\stbset{D} \gets \bot$\;
		\While{\upshape $\twomset{D} \not= \bot$}{
			$I \gets \mathit{leastVal}(\twomset{D})$\;
			$k \gets |\{ i \mid I(s_i) = 1 \}|$\;
			$\mathit{min} \gets \twomset{D} \land \#_k(s_1, \ldots, s_n)$\;
			$\stbset{D} \gets \stbset{D} \lor \mathit{min}$\;
			$\twomset{D} \gets \twomset{D} \land \neg \mathit{weakening}(\mathit{min})$\;
		}
		$\stbset{D} \gets \stbset{D} \land \bigwedge_{s \in S} (s \Rightarrow (s^\top \land s^\bot)) \land (\neg s \Rightarrow (\neg s^\top \land s^\bot))$\;
		\Fix{}{
			\For{$s \in S$}{
				$\mathit{setToOne} \gets \stbset{D} \land (s^\top \land s^\bot) \land (\varphi_s^\top \land \neg \varphi_s^\bot)$\;
				$\stbset{D} \gets (\stbset{D} \land \neg \mathit{setToOne}) \lor \mathit{setToOne}[s^\bot / \neg s^\bot]$\;
			}    
		}
		\textbf{return} $\exists s_1^\top, s_1^\bot, \ldots, s_n^\top, s_n^\bot.~(\stbset{D} \land \bigwedge_{s \in S} (s \Rightarrow (s^\top \land \neg s^\bot)))$\;
		\caption{Iterative symbolic algorithm to compute the \stbmodels $\stbset{D}$ based on the known \twomodels $\twomset{D}$.}\label{alg:stable}
	\end{algorithm}
	
	\begin{proposition}\label{prop:stable-char}
		Algorithm~\ref{alg:stable} correctly computes the set of stable models $\stbset{D}$ of an ADF $D$, based on its set of 2-valued models $\twomset{D}$ and the grounded interpretations of their corresponding reduced ADFs.
	\end{proposition}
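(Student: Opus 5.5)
The plan is to split Algorithm~\ref{alg:stable} into its two phases and prove one invariant for each. Then I combine them with Proposition~\ref{theo:two-stb} and Definition~\ref{def:stabel-model}.

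\emph{Phase 1 (the while-loop).} I will argue as for Proposition~\ref{prop:pref-char}, with $\inorder$ replaced by $\leq_t$ and ``number of $\star$'' replaced by ``number of true arguments''. The invariant is that after each iteration, $\stbset{D}$ holds exactly the $\leq_t$-minimal \twomodels with at most $k$ true arguments. Correspondingly, the remaining $\twomset{D}$ holds exactly those \twomodels that are not $\leq_t$-above any model already collected.
\begin{itemize}
\item $\mathit{leastVal}$ returns a remaining model with the minimum number $k$ of true arguments.
\item Every remaining model with exactly $k$ true arguments is $\leq_t$-minimal among all \twomodels. A smaller one would have fewer true arguments, so it was collected earlier, and its weakening would already have removed this model.
\item $\mathit{weakening}(\mathit{min})$ is exactly the set of $\leq_t$-supersets of collected models, so removing it deletes precisely the non-minimal models.
\end{itemize}
Each iteration strictly increases $k$, so the loop terminates. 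By Proposition~\ref{theo:two-stb}, every \stbmodel is $\leq_t$-minimal among \twomodels, because stable models are \twomodels. Hence this phase keeps every stable model, and it remains to show that the second phase keeps exactly the candidates $I$ that pass the grounded test.

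\emph{Phase 2 (relation encoding and fixed point).} The conjunction after the loop turns the candidate set into a relation of pairs $(I,J)$, where $J$ is 3-valued with $J(s)=\star$ if $I(s)=1$ and $J(s)=0$ if $I(s)=0$. All later operations act pointwise on pairs, so I can fix one $I$ and reason about the single $J$ associated with it.
\begin{itemize}
\item Fixing the false arguments to $0$ means that $\varphi_s[J]$ is the partial evaluation of the reduced condition $\varphi_s^{I}$. Hence $(\varphi_s^\top,\varphi_s^\bot)(J)$ is exactly the dual encoding of $\Gamma_{D^I}$ restricted to $S^I$, by the discussion of Equation~\ref{eq:dual-function}.
\item The inner update changes $J(s)$ from $\star$ to $1$ exactly when $\Gamma_{D^I}(J)(s)=1$.
\item It never creates $0$ values and never changes arguments that are already fixed.
\end{itemize}
Let $G$ be the grounded interpretation of $D^I$, i.e.\ the least fixed point $\Gamma^\omega(\bot)$ of $\Gamma=\Gamma_{D^I}$, where $\bot$ is all-$\star$.

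\emph{Main step.} I will show that the final $J$ satisfies $J(s)=1$ for all $s\in S^I$ if and only if $G(s)=1$ for all $s\in S^I$.
\begin{itemize}
\item \emph{Soundness.} By induction over the single-argument updates, $J\inorder G$ always holds. If $J\inorder G$, then monotonicity of $\Gamma$ gives $\Gamma(J)\inorder\Gamma(G)=G$, so any argument promoted to $1$ is also $1$ in $G$. Hence all-ones in $J$ implies all-ones in $G$.
\item \emph{Completeness.} Suppose $G$ is all-$1$ on $S^I$. Every iterate $\Gamma^k(\bot)$ is $\inorder G$, so it contains only $1$ and $\star$ values. I will show by induction on $k$ that $\Gamma^k(\bot)\inorder J_{\mathrm{fix}}$, where $J_{\mathrm{fix}}$ is the final $J$. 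From $\Gamma^k(\bot)\inorder J_{\mathrm{fix}}$ we get $\Gamma^{k+1}(\bot)\inorder\Gamma(J_{\mathrm{fix}})$. At the fixed point, any $1$ in $\Gamma(J_{\mathrm{fix}})$ on a $\star$-position of $J_{\mathrm{fix}}$ would still be promoted, which is impossible. So the $1$s of $\Gamma^{k+1}(\bot)$ lie among the $1$s of $J_{\mathrm{fix}}$, and since $\Gamma^{k+1}(\bot)$ has no $0$s, $\Gamma^{k+1}(\bot)\inorder J_{\mathrm{fix}}$. Taking $k$ large gives $G\inorder J_{\mathrm{fix}}$, so $J_{\mathrm{fix}}$ is all-$1$ on $S^I$.
\end{itemize}
This equivalence is the key obstacle: it justifies that a Gauss--Seidel style, ``promote-to-1-only'' iteration suffices, even though the true grounded iteration may in general derive $0$ values.

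Finally, the return statement keeps exactly those pairs in which every $s$ with $I(s)=1$ has $J(s)=1$, i.e.\ $(s^\top,s^\bot)=(1,0)$. It then projects away the dual variables. Termination of the do-until loop follows because each change only turns a $\star$ into a $1$. By the main step, a candidate $I$ survives exactly when it satisfies the grounded condition of Definition~\ref{def:stabel-model}. Together with Phase~1, the output is exactly $\stbset{D}$.
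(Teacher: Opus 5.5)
There is a gap in Phase 1. Phase 2 is correct, and it is more careful than the paper's proof. The paper calls the do-until loop ``grounding'' and treats its output as the grounded interpretation $G$ of $D^I$. You note that the loop only ever promotes $\star$ to $1$, and you prove this still decides the test ``$G$ is all-$1$ on $S^I$'': $J \inorder G$ throughout gives soundness, and $\Gamma^k(\bot) \inorder J_{\mathrm{fix}}$ when $G$ has no $0$s gives completeness.

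The gap is the sentence ``By Proposition~\ref{theo:two-stb}, every stable model is $\leq_t$-minimal among two-valued models, because stable models are two-valued models.'' That proposition only compares two \emph{stable} models; it says the stable models form a $\leq_t$-antichain. It does not rule out a non-stable 2-valued model $J$ with $J \leq_t I$ and $J \neq I$ for a stable $I$. If such a $J$ existed, the while-loop would never collect $I$, because $I$ would be removed by the $\mathit{weakening}$ of a model below it, and the algorithm would lose $I$. The fact that stable models are 2-valued models does not bridge this. Phase 1 needs minimality among \emph{all} 2-valued models.

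The missing lemma is true and short, and follows from your own Phase 2 observation. Let $I$ be stable and $J$ a 2-valued model with $J \leq_t I$. Since $J$ is $0$ wherever $I$ is $0$, for each $s \in S^I$ we get $\varphi^I_s(J|_{S^I}) = \varphi_s(J) = J(s)$. So $J|_{S^I}$ is a 2-valued model, and hence a complete interpretation, of $D^I$. The grounded interpretation $G$ of $D^I$ is the least fixed point of $\Gamma_{D^I}$, so $G \inorder J|_{S^I}$. Since $G$ is all-$1$ on $S^I$, so is $J$, which gives $J = I$. Replacing the appeal to Proposition~\ref{theo:two-stb} with this lemma closes your proof. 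The paper's proof relies on the same informal reading of that proposition, so it needs the lemma too.

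A smaller point: in your Phase 1 bullet, a model strictly $\leq_t$-below a remaining one need not have been \emph{collected}, only removed. Argue this via your invariant (remaining models are those not $\leq_t$-above any collected model) together with transitivity of $\leq_t$.
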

	
	\subsection{Implementation notes}
	
	\paragraph{Free input arguments} An ADF can contain a \emph{free input} argument $s$ where $\varphi_s = s$ (called an \emph{input variable} in BNs). This is especially common in ADFs based on Boolean networks, where such arguments represent free external stimuli. For such $s$, we can simplify the optimization process used in Algorithms~\ref{alg:preferred} and~\ref{alg:stable}. In Algorithm~\ref{alg:preferred}, it is clear that no preferred interpretation can have such $s$ set to $\star$ (only $0$ or $1$ are possible). Similarly, in Algorithm~\ref{alg:stable}, it is clear that no \stbmodel can have such $s$ set to $1$ (only $0$ is possible). When computing $\prset{D}$ and $\stbset{D}$, our implementation therefore restricts $\coset{D}$ and $\twomset{D}$ with these additional constraints. These directly propagate into the computation of $\coset{D}$ and $\twomset{D}$, simplifying the set of solutions. As such, \toolname can sometimes compute $\prset{D}$ or $\stbset{D}$ even if it fails to compute $\coset{D}$ or $\twomset{D}$.
	
	\paragraph{Implementation} The implementation of \toolname originally started within the \aeon toolbox, which we previously enhanced with the \fulsym BDD transformation of acceptance conditions into dual encoding and greedy conjunction optimization. However, this is the first publication where this enhancement is presented and tested. Compared to this initial implementation, \toolname also offers the presented \fulsym computation of preferred and \stbmodels, which are not present in \aeon. Finally, \toolname also adopts a novel BDD library called \ruddy which implements CPU cache-friendly optimizations of the BDD data structure from~\cite{PH2023}. Compared to \aeon, this makes \toolname a fully-featured ADF solver. \toolname is available on GitHub and is written entirely in Rust.
	
	\section{Evaluation}
	\label{sec:evaluation}
	
	\begin{table*}
		\centering
		\renewcommand{\arraystretch}{1.1}
		\setlength{\tabcolsep}{3pt}
		\begin{tabular}{c|c|c|c|c|c||c|c|c|c|c|c}
			& \toolname & \kadf & \godiamond & \yadf & \adfbdd & \aeon & \tsconj & \fasp & \mpbn & \biolqm & \he \\\hline         
			admiss. & \yesmark & \yesmark & \yesmark & \yesmark & \nomark & \nomark & \nomark & \nomark & \nomark & \yesmark & \nomark \\
			complete & \yesmark & \yesmark & \yesmark & \yesmark & \yesmark & \yesmark & \nomark & \nomark & \nomark & \yesmark & \nomark \\
			preferred & \yesmark & \yesmark & \yesmark & \yesmark & \nomark & \yesmark & \yesmark & \nomark & \yesmark & \yesmark & \nomark \\
			2-val. & \yesmark & \yesmark & \yesmark & \nomark & \yesmark & \yesmark & \yesmark & \yesmark & \yesmark & \yesmark & \yesmark \\
			stable & \yesmark & \yesmark & \yesmark & \yesmark & \yesmark & \nomark & \nomark & \nomark & \nomark & \nomark & \nomark \\
		\end{tabular}
		\vspace{5pt}
		\caption{Overview of the tested tools and ADF problems that they can solve out-of-the-box. Tools on the right-hand-side require translation into the \texttt{.bnet} BN format.}
		\label{tab:tool-problem-matrix}
	\end{table*}

	\begin{table*}
		\centering    
		\renewcommand{\arraystretch}{1.1}
		\setlength{\tabcolsep}{2pt}    
		\begin{tabular}{c|C{34pt}|C{36pt}|C{43pt}|C{32pt}|C{43pt}||C{32pt}|C{45pt}|C{32pt}|C{32pt}|C{40pt}|C{35pt}}
			& \toolname & \kadf & \texttt{goDiam.} & \yadf & \adfbdd & \aeon & \tsconj & \fasp & \mpbn & \biolqm & \he \\\hline    
			
			admiss. & 
			\cellcolor{darkgreen}\smallcell{\textbf{695 (338) 991s}} & 
			\cellcolor{lightgreen}\smallcell{298 (0) \hspace{15pt} 1796s} & 
			\cellcolor{lightgreen}\smallcell{170 (0) \hspace{15pt} 2063s} & 
			\cellcolor{lightgreen}\smallcell{201 (0) 1992s} & 
			\cellcolor{lightgrey}\smallcell{N/A} & 
			\cellcolor{lightgrey}\smallcell{N/A} & 
			\cellcolor{lightgrey}\smallcell{N/A} & 
			\cellcolor{lightgrey}\smallcell{N/A} & 
			\cellcolor{lightgrey}\smallcell{N/A} & 
			\cellcolor{lightgreen}\smallcell{292 (0) \hspace{15pt} 1807s} &
			\cellcolor{lightgrey}\smallcell{N/A}
			\\\hline
			
			comp. & 
			\cellcolor{darkgreen}\smallcell{\textbf{974 (6{\tiny+79}) 404s}} & 
			\cellcolor{darkgreen}\smallcell{\textbf{883 (23) 602s}} & 
			\cellcolor{lightgreen}\smallcell{190 (0) \hspace{15pt} 2023s} & 
			\cellcolor{darkgreen}\smallcell{\textbf{535 (21) 1320s}} & 
			\cellcolor{lightgreen}\smallcell{476 (0) \hspace{10pt} 1430s} & 
			\cellcolor{lightgreen}\smallcell{963 (0) 432s} & 
			\cellcolor{lightgrey}\smallcell{N/A} & 
			\cellcolor{lightgrey}\smallcell{N/A} & 
			\cellcolor{lightgrey}\smallcell{N/A} & 
			\cellcolor{lightgreen}\smallcell{619 (0) \hspace{15pt} 1136s} &
			\cellcolor{lightgrey}\smallcell{N/A}
			\\\hline
			
			pref. & 
			\cellcolor{darkgreen}\smallcell{\textbf{982 (24{\tiny+15}) 389s}} & 
			\cellcolor{lightgreen}\smallcell{1081  (0) \hspace{10pt} 181s} & 
			\cellcolor{lightgreen}\smallcell{215  (0) \hspace{15pt} 1958s} & 
			\cellcolor{lightgreen}\smallcell{342  (0) 1732s} & 
			\cellcolor{lightgrey}\smallcell{N/A} & 
			\cellcolor{lightgreen}\smallcell{900  (0) 562s} & 
			\cellcolor{darkgreen}\smallcell{\textbf{1103 (1) \hspace{10pt} 134s}} & 
			\cellcolor{lightgrey}\smallcell{N/A} & 
			\cellcolor{lightgreen}\smallcell{1065  (0) 219s} & 
			\cellcolor{lightgreen}\smallcell{652 (0) \hspace{15pt} 1067s} &
			\cellcolor{lightgrey}\smallcell{N/A}
			\\\hline
			
			2-val. & 
			\cellcolor{darkgreen}\smallcell{\textbf{1142 (3) 52s}} & 
			\cellcolor{lightgreen}\smallcell{1116  (0) \hspace{10pt} 102s} & 
			\cellcolor{lightgreen}\smallcell{1130  (0) \hspace{20pt} 74s} & 
			\cellcolor{lightgrey}\smallcell{N/A} & 
			\cellcolor{lightgreen}\smallcell{950 (0) \hspace{10pt} 456s} & 
			\cellcolor{lightgreen}\smallcell{1139  (0) 59s} & 
			\cellcolor{lightgreen}\smallcell{1104  (0) \hspace{10pt} 134s} & 
			\cellcolor{lightgreen}\smallcell{1123  (0) \hspace{10pt} 94s} & 
			\cellcolor{lightgreen}\smallcell{1079 (0) \hspace{10pt} 189s} & 
			\cellcolor{lightgreen}\smallcell{1038 (0) \hspace{10pt} 272s} &
			\cellcolor{lightgreen}\smallcell{1120 (0) \hspace{10pt} 98s}
			\\\hline
			
			stable & 
			\cellcolor{darkgreen}\smallcell{\textbf{1149 (50) 36s}} & 
			\cellcolor{darkgreen}\smallcell{\textbf{1115 (16) 105s}} & 
			\cellcolor{lightgreen}\smallcell{202 (0) \hspace{15pt} 2002s} & 
			\cellcolor{lightgreen}\smallcell{65 (0) 2269s} & 
			\cellcolor{lightgreen}\smallcell{571 (0) \hspace{10pt} 1249s} & 
			\cellcolor{lightgrey}\smallcell{N/A} & 
			\cellcolor{lightgrey}\smallcell{N/A} & 
			\cellcolor{lightgrey}\smallcell{N/A} & 
			\cellcolor{lightgrey}\smallcell{N/A} & 
			\cellcolor{lightgrey}\smallcell{N/A} &
			\cellcolor{lightgrey}\smallcell{N/A}
			\\\hline
		\end{tabular}
		\vspace{5pt}
		\caption{Summary table of benchmark results. Each cell lists (a) the total number of completed benchmarks; (b) in parentheses, the number of benchmarks \emph{uniquely} solved by that tool; (c) the PAR2 runtime score (in seconds). For \toolname, we also list (using $+x$) the number of instances solved uniquely by \toolname and \aeon, but no other tool. Bold text highlights tools that uniquely solved at least one problem instance.}
		\label{tab:results-summary}
	\end{table*}
	
	\paragraph{Tools} To evaluate the performance of \toolname, we compare its runtime against the current fastest ADF solvers: \kadf~\cite{LMNWW2022} (version 2021-03-31), \godiamond~\cite{SE2017} (version 2.0.2), \yadf~\cite{BDHLW2020} (version 2.2) and \adfbdd~\cite{EGRW2022} (version 0.3.1). Furthermore, we test the following state-of-the-art BN analysis tools that can be used to solve certain ADF problems: \aeon~\cite{BBHPSS2022} (version 1.2.5), \tsconj~\cite{TBPS2024} (version 2024-10-29), \fasp~\cite{TBS2023CP} (version 0.3.0), \mpbn~\cite{VML2024} (version 0.4.1), \biolqm~\cite{Naldi2018} (version 0.6.1, using BDDs) and \he (version 1.1)~\cite{HSBMBT2025}.
	Since these tools can typically only solve a subset of ADF problems, we provide an overview of the tested combinations in Table~\ref{tab:tool-problem-matrix}.
	All BDD-based tools order the variables as presented in the input file and we do not perform any specific tuning in this regard.
	
	\paragraph{Test instances} Our benchmark dataset consists of $1{,}237$ ADF instances.
	Out of these, $245$ are Boolean networks from the \emph{Biodivine Boolean Models} (BBM) dataset~\cite{PSBBH2023} converted into the ADF format. The rest are benchmarks previously used in~\cite{BDHLW2020,LMNWW2022,EGRW2022}.
	Specifically, this includes (a)~test instances based on the benchmarks from ICCMA 2017 ($601$ instances; from the \yadf website\footnote{https://www.dbai.tuwien.ac.at/proj/adf/yadf/}); (b)~test instances based on the ICCMA 2019 competition ($276$ instances; from the \kadf repository\footnote{https://bitbucket.org/andreasniskanen/k-adf/}); (c)~ADFs of metro networks and grids used in~\cite{BDHLW2020} ($115$ instances; from the \yadf website).
	We measured up to $1{,}076$ arguments per benchmark instance ($135$ average; $80$ median) and up to $923{,}346$ links (i.e., dependencies between arguments)  per benchmark instance ($16{,}470$ average; $280$ median).
	
	\paragraph{BN conversion} For the BN tools, the ADF instances also had to be converted into the \texttt{.bnet} format.
	Here, we encountered a compatibility issue, as \texttt{.bnet} does not support the XOR operator, which is prominently used in some of the ADF benchmarks.
	We attempted to automatically rewrite these ADF instances without using XOR, but in $72$ cases, this resulted in extremely large BNs that we had to exclude from the final analysis.
	To make the presented conclusions fair to all tools, we decided to completely exclude these $72$ benchmarks from the final comparison, leaving $1{,}165$ instances.
	However, we did test ADF tools (including \toolname) on these inputs, and $52{/}72$ instances failed across all problems and tools.
	As such, ADF-specific tools are better at handling large XOR problem instances, but the advantage is not very significant on the benchmarks we have available.
	
	\begin{figure*}
		\centering
		\begin{minipage}{0.33\linewidth}
			\centering
			\includegraphics[width=1.0\linewidth]{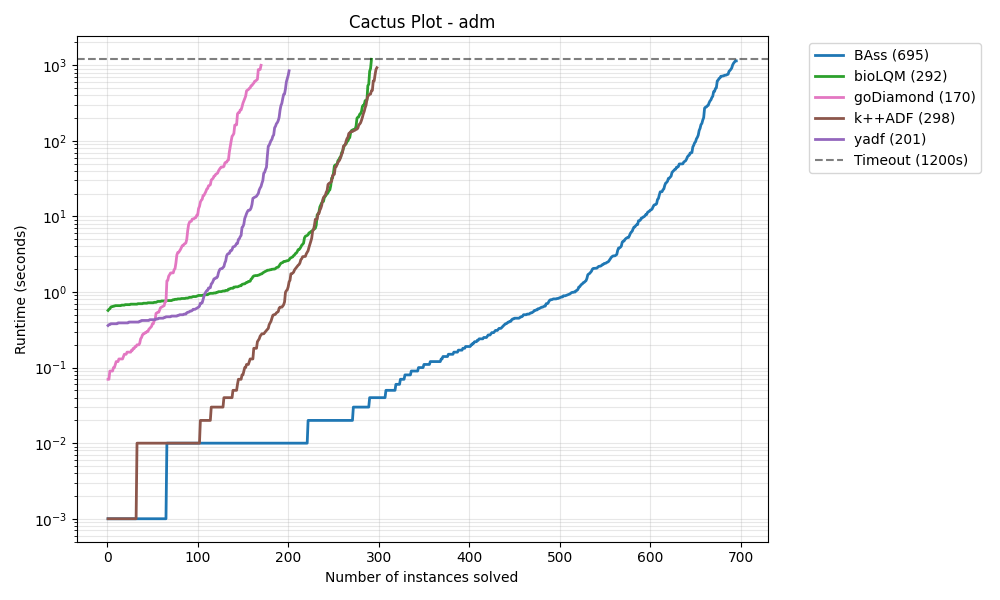}
			
			(a) Admissible interpretations
		\end{minipage}
		\begin{minipage}{0.33\linewidth}
			\centering
			\includegraphics[width=1.0\linewidth]{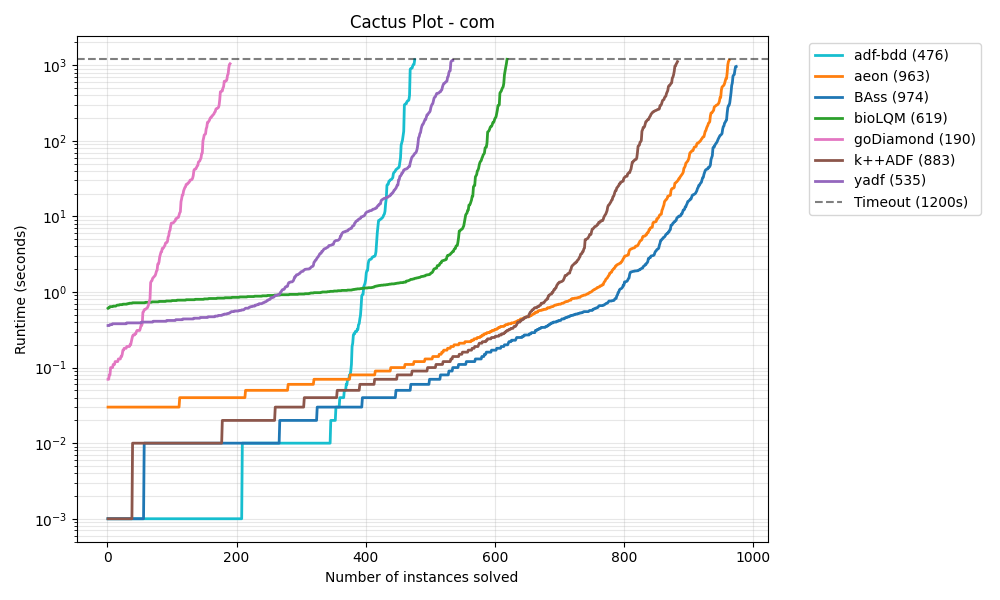}
			
			(b) Complete interpretations
		\end{minipage}
		\begin{minipage}{0.32\linewidth}
			\centering
			\includegraphics[width=1.0\linewidth]{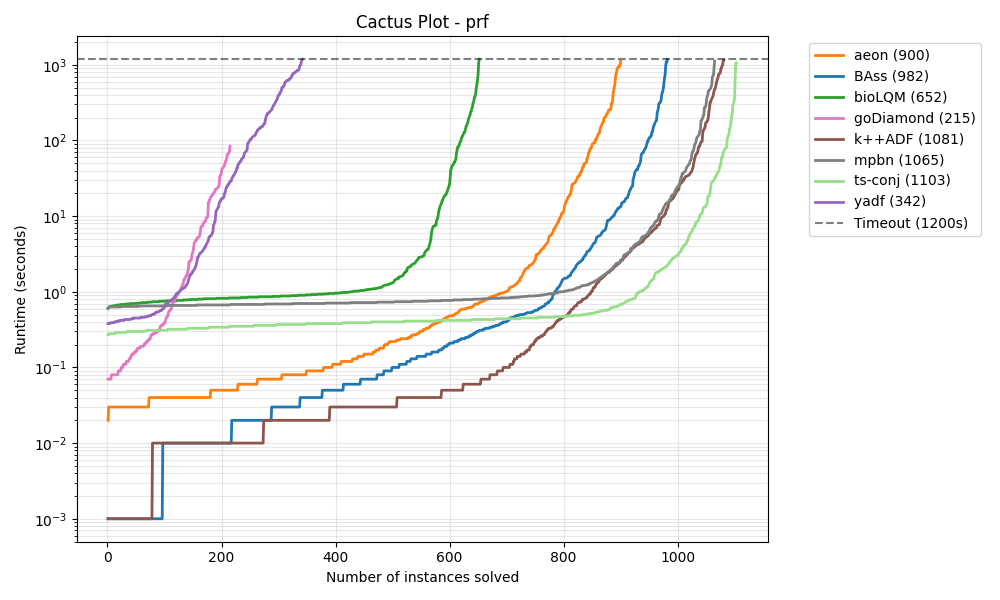}
			
			(c) Preferred interpretations
		\end{minipage}
		
		\vspace{4pt}
		
		\begin{minipage}{0.32\linewidth}
			\centering
			\includegraphics[width=1.0\linewidth]{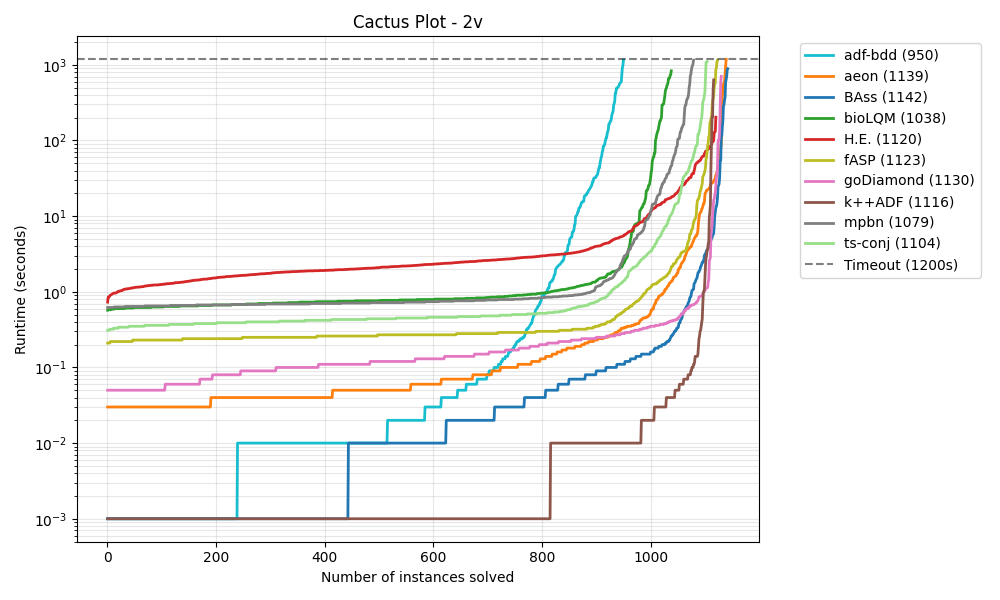}
			
			(d) 2-valued models
		\end{minipage}
		\begin{minipage}{0.32\linewidth}
			\centering
			\includegraphics[width=1.0\linewidth]{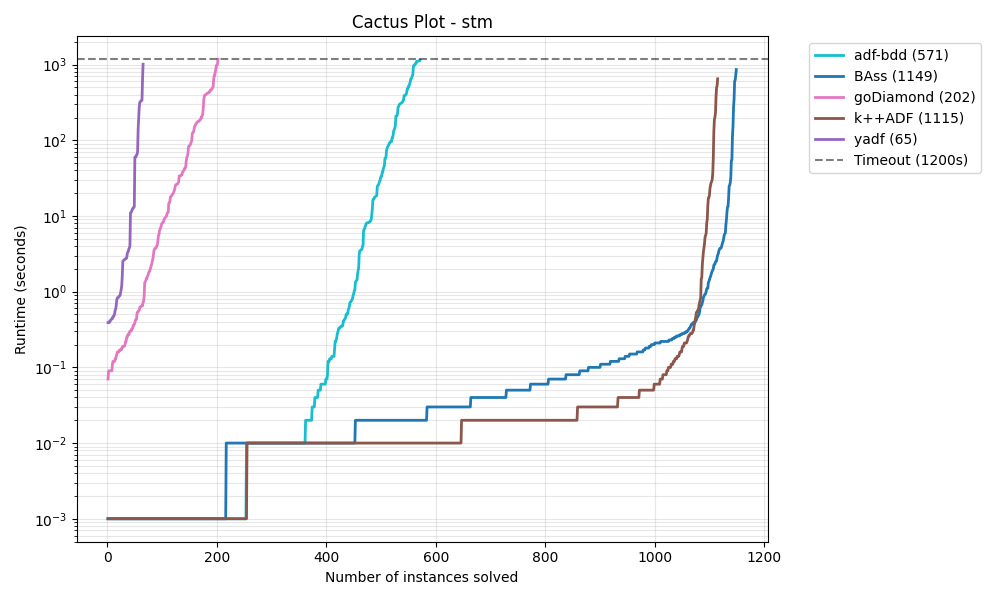}
			
			(e) Stable models
		\end{minipage}
		
		\caption{Cactus plots showing the runtime of individual tools across different ADF problem categories.}
		\label{fig:performance-plots}
	\end{figure*}
	
	\begin{table*}[t]
		\centering    
		\renewcommand{\arraystretch}{1.1}
		\setlength{\tabcolsep}{4pt}    
		\begin{tabular}{c|c|l|r|r}
			BBM ID & $|S|$ & Source & \#Preferred & \#Complete \\\midrule
			222 & 206 & \cite{bbm-222} & 687194767360 & 6699707176948710000 \\
			116 & 144 & \cite{bbm-116} & 423188831391449000 & 9389013197839268000000000000 \\
			223 & 141 & \cite{bbm-223} & 408333504 & 57879370298664 \\
			087 & 133 & \cite{bbm-087} & 17179869184 & 16680058317523500 \\
			118 & 121 & \cite{bbm-116} & 67545198297874400 & 540713282220605900000000000 \\
			112 & 112 & \cite{bbm-116} & 135266304 & 7926623204940 
		\end{tabular}
		\caption{Largest BBM Boolean networks in terms of arguments (i.e., $|S|$) where \toolname{} was uniquely able to identify the whole solution space of complete and preferred interpretations.}
		\label{tab:biological-results}
	\end{table*}

	\paragraph{Methodology} We ran all experiments on an AMD EPYC 7713 64-core server with 1TB of RAM, such that up to $48$ experiments were executed in parallel.
	For each experiment, we set a time limit of $1{,}200$ seconds. We did observe out-of-memory errors and related segmentation faults for certain tools (namely \yadf, \kadf, \he, and \biolqm).
	For simplicity, our main results count these as failures (the same way as timeouts), but they are logged separately in the raw dataset available in the attached reproducibility artifact.
	To compare the overall runtime, we use the PAR2 score, which is commonly used in solver competitions.
	The PAR2 score is the average runtime across all benchmarks, while counting timeouts or errors as $2 \cdot 1200$ seconds. 
	To mitigate effects of disk I/O and printing operations, the benchmarks only count the solutions (i.e. results are not printed or written to disk). 
	This ensures comparable experimental environment for all tools.
	As a separate experiment, we compared the full output of all tools for correctness using benchmarks where all tools finished in $\leq10s$, ensuring tools agree in their output.
	
	\paragraph{Results discussion} The measured results are presented in summary Table~\ref{tab:results-summary} and Figure~\ref{fig:performance-plots}.
	First, let us compare the performance of BDD-based tools.
	Here, \biolqm and \adfbdd represent the most straightforward implementations, but do not achieve good performance compared to SAT/ASP-based tools.
	Meanwhile, \aeon and \toolname are often comparable to SAT/ASP.
	In particular, note that \toolname solves unique benchmarks across all categories.
	Nevertheless, for complete, preferred, and stable interpretations, SAT/ASP still holds unique advantage in some problem instances (here, the two approaches appear somewhat complementary).
	Also notice that \toolname solved the most instances (and had the best PAR2 score) in the admissible, complete and stable model categories. 
	For preferred interpretations, \toolname is limited by its reliance on the full knowledge of $\coset{D}$.
	Let us also compare \toolname to \aeon (which represents an experimental implementation of a subset of the presented methods).
	In the complete and 2-valued categories, \toolname outperforms \aeon slightly due to its more efficient BDD implementation, but otherwise relies on the same methodology.
	In the preferred category, we see a clear improvement due to the new fully symbolic procedure.
	Finally, we also list the number of benchmarks uniquely solved by \toolname and \aeon, but no other tool, as this collectively represents the contribution of methods presented in this paper compared to the remaining tools.
	
	\subsection{Biological applicability}
	\label{sec:case-study}
	
	We introduced methods that reveal large sets of preferred interpretations (minimal trap spaces) and complete interpretations (percolated trap spaces). Table~\ref{tab:biological-results} showcases the six largest models that can only be handled by \toolname. 
	
	Having these solution sets available provides a significant benefit to the systems biology community. Complete knowledge of minimal trap spaces is crucial when determining the landscape of biological phenotypes in a Boolean model~\cite{HAH2015,KHNS2020} and detecting its \emph{phenotype determining nodes}. Meanwhile, percolated trap spaces (and to some extent all trap spaces) are relevant for BN control, as each trap space matching a desired biological profile can be used to derive a control strategy that drives the network towards this phenotype~\cite{TKPR2025,TP2024,FTS2020}. Finally, percolated trap spaces are used to form \emph{succession diagrams}~\cite{rozum2021parity}, which represent the gradual irreversible biological commitments of the model. 
	
	Even when the complete solution set is not required, or it is impossible to analyze each solution individually, having the solution set represented as a BDD provides a unique advantage: Due to its structure, one can \emph{uniformly} sample solutions from a BDD. Meanwhile, sampling solutions using a solver is typically biased by the underlying algorithm. As such, BDDs allow statistical analysis on the solution space not available to solver-based methods.
	
	\section{Conclusion}
	\label{sec:conclusion}
	
	We have presented \toolname, a new BDD-based reasoning tool for Abstract Dialectical Frameworks that leverages recent theoretical advances connecting ADFs with BNs.
	\toolname builds on existing theoretical foundations of \aeon, \biolqm and \adfbdd, but provides novel symbolic algorithms and optimization strategies for computing admissible, complete, and preferred interpretations as well as 2-valued and \stbmodels.
	It is, to the best of our knowledge, the first BDD-based ADF solver to support \fulsym reasoning on the preferred and stable semantics.
	Our extensive evaluation over $1{,}200$+ ADF and BN instances demonstrates that \toolname dramatically outperforms existing BDD-based tools and matches or exceeds state-of-the-art SAT/ASP-based solvers, particularly for models involving large numbers of solutions.
	In such cases, the BDD-based representation can replace full solution enumeration, allowing, for example, provably uniform random sampling of solutions.
	This property is crucial to maintain statistical interpretability in biological applications.
	Our results highlight the promise of symbolic methods as a robust and scalable foundation for ADF reasoning.
	Notably, we have successfully applied \toolname to analyze previously unsolved biological networks.
	
	Future work will improve performance by exploring dynamic variable reordering, hybrid symbolic-SAT strategies (e.g., eliminate the need for full computation of $\coset{D}$ when computing $\prset{D}$), and BN reduction~\cite{VelizCuba2011,TP2024}. We also plan to expand support to additional semantics such as the naive, stage, and semi-stable semantics~\cite{GRS2021,ZVV2021}.
	
	\section*{Availability} 
	\toolname is available at \url{https://github.com/sybila/biodivine-bass}, with a benchmark reproducibility artifact at \url{https://doi.org/10.5281/zenodo.17794231}. 
	The appendices are published on arXiv~\cite{pastva2026bass}.
	
	\section*{Acknowledgments} 
	We acknowledge Ho Chi Minh City University of Technology (HCMUT), VNU-HCM for supporting this study.
	
	\section*{AI Declaration} 
	Generative AI (Cursor) was used during the development and testing of auxiliary features of \toolname; all core algorithms were written and tested manually. Generative AI (Gemini) was also utilized for proofreading the main text. However, no published content was directly generated by AI.

%
% ---- Bibliography ----
%
% BibTeX users should specify bibliography style 'splncs04'.
% References will then be sorted and formatted in the correct style.

%% The file kr.bst is a bibliography style file for BibTeX 0.99c
\bibliographystyle{unsrt}
\bibliography{references}

\appendix

\section{Detailed proofs}

\begin{proposition}
	The above BDD-based characterizations of $\adset{D}$ and $\coset{D}$ are correct, in the sense that the models of $f_{\adset{D}}$ (resp.\ $f_{\coset{D}}$) correspond exactly to the admissible (resp.\ complete) interpretations of the given ADF $D$.
\end{proposition}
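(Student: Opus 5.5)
The proof reduces both characterizations to a componentwise statement about a single argument $s$, using the dual-encoding lemma already established in the text. Fix a valid dual encoding of a \threeint $I$, i.e.\ one satisfying $s^\top \lor s^\bot$ for all $s$. I will assume, as the paper does implicitly, that models of $f_{\adset{D}}$ and $f_{\coset{D}}$ range only over such valid encodings (conjoining $\bigwedge_s (s^\top \lor s^\bot)$ if needed).

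The first key fact is the observation following Equation~\eqref{eq:dual-function}. There, $\varphi_s^\top(I)=1$ iff some \twoint $J \geq_i I$ satisfies $\varphi_s$, and $\varphi_s^\bot(I)=1$ iff some such $J$ falsifies $\varphi_s$. Since every completion $J$ of $I$ corresponds to a total substitution into $\varphi_s[I]$, $\varphi_s[I]$ is a tautology iff $\varphi_s^\bot(I)=0$. Likewise it is a contradiction iff $\varphi_s^\top(I)=0$. Because at least one completion exists, the pair $(\varphi_s^\top(I),\varphi_s^\bot(I))$ is never $(0,0)$. It is therefore exactly the dual encoding of $\charoper(I)(s)$. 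I would state this as a small lemma first, so that it can be reused.

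Next I would reformulate the orderings bitwise. For valid encodings $(a^\top,a^\bot)$ of $v \in \threed$ and $(b^\top,b^\bot)$ of $w \in \threed$, we have $v \inorder w$ iff $b^\top \Rightarrow a^\top$ and $b^\bot \Rightarrow a^\bot$. In words, $\uval=(1,1)$ is bitwise top, and the information order is the reverse of bitwise inclusion on $\{(1,0),(0,1),(1,1)\}$. I would check this by a three-by-three case table. Applying it with $v=I(s)$ and $w=\charoper(I)(s)$ turns the condition into $(\varphi_s^\top \Rightarrow s^\top) \land (\varphi_s^\bot \Rightarrow s^\bot)$, which is the $s$-th conjunct of $f_{\adset{D}}$. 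Taking the conjunction over all $s$ gives $I \inorder \charoper(I)$, which is admissibility.

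For completeness, equality $I(s)=\charoper(I)(s)$ needs the reverse inclusion as well: $s^\top \Rightarrow \varphi_s^\top$ and $s^\bot \Rightarrow \varphi_s^\bot$. The formula provides only the weaker conjunct $(s^\top\land s^\bot)\Rightarrow(\varphi_s^\top\land\varphi_s^\bot)$, so the main obstacle is showing that this suffices. I would argue by cases on the valid value of $I(s)$. If $I(s)=\uval$, the extra conjunct directly forces $\charoper(I)(s)=\uval$. If $I(s)=1$, i.e.\ $(1,0)$, admissibility forces $\varphi_s^\bot=0$, and validity of the pair (the lemma above) then forces $\varphi_s^\top=1$, giving equality. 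The case $I(s)=0$ is symmetric. Conversely, if $I=\charoper(I)$, then each conjunct holds trivially. Conjoining over $s$ completes the argument.
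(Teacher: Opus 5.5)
Your proposal is correct and follows essentially the same componentwise argument as the paper. Both identify $(\varphi_s^\top,\varphi_s^\bot)$ with the dual encoding of $\charoper(I)(s)$, read the two admissibility implications as $I(s)\inorder\charoper(I)(s)$, and use the extra conjunct to force $\charoper(I)(s)=\uval$ whenever $I(s)=\uval$. You additionally make explicit what the paper leaves implicit: the correspondence lemma itself, the $0/1$ cases for completeness, and the validity constraint $\bigwedge_{s}(s^\top\lor s^\bot)$. That constraint is genuinely needed, since on an encoding containing $(0,0)$ every $dual(\cdot)$ evaluates to $0$, so all implications in $f_{\adset{D}}$ and $f_{\coset{D}}$ hold vacuously.
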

\begin{proof}
	The correctness of these characterizations is easily shown once we recall the correspondence between $(\varphi_s^\top, \varphi_s^\bot)$ and the dual encoding of $\charoper{}$.
	
	For $\adset{D}$, the condition states that $\charoper{(I)}(s) = 1$ implies $I(s) \in \{1, \star\}$, $\charoper{(I)}(s) = 0$ implies $I(s) \in \{0, \star\}$, and $\charoper{(I)} = \star$ implies $I(s) = \star$. Overall, this ensures $I \inorder \charoper{(I)}$.
	
	For $\coset{D}$, we reuse the $\adset{D}$ conditions (since $\coset{D} \subseteq \adset{D}$), and introduce a constraint stating that $I(s) = \star$ implies $\charoper{(I)}(s) = \star$. Together with the conditions of $\adset{D}$, this guarantees $I = \charoper{(I)}$.
\end{proof}

\begin{proposition}
	Algorithm~2 correctly computes the set of preferred interpretations $\prset{D}$ of an ADF $D$, based on its set of complete interpretations $\coset{D}$, using symbolic operations over BDDs.
\end{proposition}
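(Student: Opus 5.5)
The proof is a loop-invariant argument over Algorithm~2. Write $C_0$ for the initial set $\coset{D}$, and $C_j$, $P_j$ for the values of $\coset{D}$ and $\prset{D}$ after $j$ iterations. By definition, preferred interpretations are the $\inorder$-maximal admissible interpretations. The first step is to reduce this to maximality within $C_0$. I will invoke the standard facts that every preferred interpretation is complete, and that every admissible interpretation lies $\inorder$-below some complete one (iterate $\charoper$ from an admissible $I$; the iterates stay admissible and increase). Together these give that $\prset{D}$ is exactly the set of $\inorder$-maximal elements of $C_0$.

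The maintained invariant has three parts:
\begin{itemize}
\item $P_j$ consists exactly of the $\inorder$-maximal elements of $C_0$ already found;
\item $C_j$ contains every $\inorder$-maximal element of $C_0$ not yet in $P_j$;
\item every $J\in C_j$ that is $\inorder$-maximal within $C_j$ is $\inorder$-maximal within $C_0$.
\end{itemize}

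The key observation per iteration is the following. $\mathit{leastVal}$ returns some $I\in C_j$ with the minimum number $k$ of $\star$ values. Then every $J\in C_j$ with exactly $k$ stars is $\inorder$-maximal in $C_j$: if $J\inorderst J'$ with $J'\in C_j$, then $J'$ would have strictly fewer stars, since a strict $\inorder$-step turns some $\star$ into a Boolean value and never the reverse, contradicting the minimality of $k$. By the third part of the invariant, the set $\mathit{prf}$ therefore consists of genuinely preferred interpretations.

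Next I check the removal step, which subtracts $\mathit{weakening}(\mathit{prf})$, i.e.\ all $J'\inorder J$ for some $J\in\mathit{prf}$. This removes all of $\mathit{prf}$ itself, so the loop strictly shrinks $C_j$ and terminates; in fact $k$ strictly increases, so there are at most $n+1$ iterations. It also removes only elements that are either already collected or strictly below a preferred interpretation, hence non-maximal in $C_0$. So no unfound preferred interpretation is lost, which maintains the second part of the invariant.

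The main obstacle is maintaining the third part of the invariant. Suppose $J\in C_{j+1}$ is maximal in $C_{j+1}$ but $J\inorderst J'$ for some $J'\in C_0$. Take $J'$ to be $\inorder$-maximal in $C_0$, which exists by finiteness. Then either $J'$ is still in $C_{j+1}$, contradicting the maximality of $J$ there, or $J'$ was collected into some earlier $\mathit{prf}$. In the latter case $J\inorder J'$ means $J$ lies in the weakening of that $\mathit{prf}$ and was removed, a contradiction. Hence $J$ is maximal in $C_0$.

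At termination $C_j=\bot$, so by the second part of the invariant every preferred interpretation has been collected, and by the first part the output $P_j$ equals $\prset{D}$. Finally, I note that the semantic reading of $\mathit{leastVal}$, $\#_k$ and $\mathit{weakening}$ under the dual encoding, as stated in the text, is assumed to be correct.
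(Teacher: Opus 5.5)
Your proof is correct and follows the same route as the paper's: the remaining complete interpretations with the fewest $\star$ values are $\inorder$-maximal and are collected together, and removing $\mathit{weakening}(\mathit{prf})$ guarantees that the next least-$\star$ element is again preferred, with $k$ strictly increasing across iterations. Your version is more rigorous than the paper's sketch: it proves the invariant that the paper only asserts ("this ensures that $\mathit{leastVal}$ is again a preferred interpretation"), and it justifies that maximality among complete interpretations coincides with maximality among admissible ones, which the paper takes for granted.
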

\begin{proof}
	The procedure effectively works by identifying preferred interpretations with a fixed number of $\star$ values, starting with the least amount of $\star$ values. Interpretations with the same amount of $\star$ values are always incomparable and can be thus considered ``concurrently'' within one iteration.
	
	Therefore, in every iteration, the procedure selects one remaining complete interpretation with the least amount of $\star$ values.
	At this point, not only is such interpretation guaranteed to be $\inorder$-maximal (i.e. preferred), but all other interpretations $\mathit{prf}$ with the same number of $\star$ values are preferred as well.
	Consequently, these are ``moved'' to the symbolic set $\prset{D}$.
	Finally, we compute the set of all $I'$ s.t. $I' \inorder I$ for some $I$ from $\mathit{prf}$ and remove these from $\coset{D}$.
	This ensures that $\mathit{leastVal}(\coset{D})$ is again a preferred interpretation (or $\coset{D}$ is empty).
	
	Note that if the ADF has any \twomodels, these are always preferred and always all detected in the first iteration of Algorithm~2.
	Subsequently, in each iteration, we completely remove one ``size'' of 3-valued preferred interpretations, meaning the algorithm converges in $\mathcal{O}(|S|)$ iterations.
\end{proof}

\begin{proposition}[Proposition 3.8 of (Strass 2013)]
	Consider an ADF \(D = (S, C)\).
	Given two \stbmodels \(I_1\) and \(I_2\) of \(D\), it holds that \(I_1 \leq_t I_2\) implies \(I_1 = I_2\).
\end{proposition}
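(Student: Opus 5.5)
The plan is to show that, for stable $I_2$, the grounded interpretation $G_2$ of the reduced ADF $D^{I_2}$ can be transported into the reduced ADF $D^{I_1}$. This will force every argument true in $I_2$ to be true in $I_1$. We already have $I_1 \leq_t I_2$, so we get $I_2 \leq_t I_1$. Since both are 2-valued, mutual $\leq_t$ yields $I_1 = I_2$.

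Concretely, assume $I_1 \leq_t I_2$, so $S^{I_1} \subseteq S^{I_2}$. The grounded interpretation of an ADF is the least fixed point of its characteristic operator. It is obtained by iterating $\Gamma$ from the all-$\uval$ interpretation, and $\Gamma$ is $\inorder$-monotone. Let $G_2$ be the grounded interpretation of $D^{I_2}$. Stability of $I_2$ gives $G_2(s) = \tval$ for all $s \in S^{I_2}$, so $G_2$ is the 2-valued interpretation mapping all of $S^{I_2}$ to $\tval$. Let $H$ be the grounded interpretation of $D^{I_1}$, which maps all of $S^{I_1}$ to $\tval$ by stability of $I_1$.

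The key step is an induction on the iteration stages $G_2^{(0)} \inorder G_2^{(1)} \inorder \cdots$ of $\Gamma_{D^{I_2}}$ starting from the all-$\uval$ interpretation. The goal is to show that every $s$ with $G_2^{(j)}(s) = \tval$ lies in $S^{I_1}$ and satisfies $H(s) = \tval$.

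Suppose $G_2^{(j+1)}(s) = \tval$. Then $\varphi_s^{I_2}[G_2^{(j)}]$ is a tautology. Inside $\varphi_s^{I_2}$, the arguments false in $I_2$ are already replaced by $\fval$, and by induction all arguments true in $G_2^{(j)}$ are true in $H$ and belong to $S^{I_1}$. We need to transfer this to $D^{I_1}$. Any 2-valued completion that sets every argument outside $S^{I_1}$ to $\fval$ and every $G_2^{(j)}$-true argument to $\tval$ is, restricted to $S^{I_2}$, a completion of $G_2^{(j)}$. Hence it satisfies $\varphi_s$. This shows $\varphi_s$ evaluated under $I_1$-falsity plus $H$ is a tautology.

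One subtlety is the case $s \notin S^{I_1}$. Here I would use that $I_1$ is a 2-valued model: $I_1(s) = \varphi_s(I_1)$. The assignment $I_1$ itself is one of the completions considered above: it sets everything outside $S^{I_1}$ to $\fval$, and its true set contains the $G_2^{(j)}$-true arguments by the induction hypothesis. So $\varphi_s(I_1) = \tval$, hence $I_1(s) = \tval$ and $s \in S^{I_1}$. Then $H \inorder \Gamma_{D^{I_1}}(H)$ gives $H(s) = \tval$ via the tautology argument above. Since the iteration stabilises at $G_2$, every $s \in S^{I_2}$ ends up in $S^{I_1}$.

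The main obstacle is the bookkeeping between the two reduced ADFs. They have different argument sets, and partial evaluation substitutes $\fval$ for different sets of arguments. I would handle this by working throughout with full 2-valued completions over $S$ rather than syntactic reducts. Each tautology claim then becomes a statement about all completions of a given partial assignment, where containment of the constrained sets is immediate.
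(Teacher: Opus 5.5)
Your proof is correct. The paper gives no argument of its own for this statement; it only restates it and defers to~\cite{Strass2013}. So yours is an independent, self-contained route: an induction along the Kleene iteration $G_2^{(0)} \inorder G_2^{(1)} \inorder \cdots$ of $\Gamma_{D^{I_2}}$.

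Its core is one observation. If $G_2^{(j+1)}(s)=\tval$, then $I_1$ is itself one of the completions witnessing that $\varphi_s^{I_2}[G_2^{(j)}]$ is a tautology: it is $\fval$ outside $S^{I_2}\supseteq S^{I_1}$, and $\tval$ on the $G_2^{(j)}$-true arguments by the induction hypothesis. Hence $I_1(s)=\varphi_s(I_1)=\tval$. This step uses only that $I_1$ is a 2-valued model.

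The $H$-half of your invariant is superfluous. You already know $H$ is $\tval$ on all of $S^{I_1}$ by stability of $I_1$. As written, $H \inorder \Gamma_{D^{I_1}}(H)$ would not give $H(s)=\tval$ anyway; you would need $H=\Gamma_{D^{I_1}}(H)$.

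Once you drop it, you have actually shown a stronger fact: a stable model is $\leq_t$-minimal among \emph{all} 2-valued models. This stronger form is what Algorithm~\ref{alg:stable} relies on, since it discards every non-$\leq_t$-minimal 2-valued model before the grounding check. It also matches the paper's own informal gloss. It does not follow from the proposition as quoted, which only compares two stable models, so your direct argument buys more than the citation does.

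One step is worth making explicit. Restricting a completion to $S^{I_2}$ yields a completion of $G_2^{(j)}$ only because $G_2^{(j)}$ contains no $\fval$ values. This holds since $G_2^{(j)} \inorder G_2$ and $G_2$ is $\tval$ throughout $S^{I_2}$.
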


\begin{proposition}
	Algorithm~3 correctly computes the set of stable models $\stbset{D}$ of an ADF $D$, based on its set of two-valued models $\twomset{D}$ and the grounded interpretations of the corresponding reduced ADFs.
\end{proposition}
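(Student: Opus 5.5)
The proof splits along the two phases of Algorithm~\ref{alg:stable}. \emph{Phase one} is the optimisation loop (lines~1--8). It should yield exactly the $\leq_t$-minimal \twomodels of $D$, and none of the removed models is stable. \emph{Phase two} is the fixed-point loop together with the final projection. It should keep, among these candidates, exactly those $I$ whose reduct $D^I$ has a grounded interpretation $G$ with $G(s)=\tval$ whenever $I(s)=\tval$.

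\emph{Phase one.} I would reuse the argument from the proof of Proposition~\ref{prop:pref-char}, swapping $\inorder$ for $\leq_t$ and ``number of $\star$'' for ``number of $\tval$''. Since $\#_k(s_1,\ldots,s_n)$ selects models with exactly $k$ true arguments, each iteration adds to the candidate set the remaining models with the least number of true arguments. Two such models are $\leq_t$-incomparable, and no remaining model is strictly $\leq_t$-below them. By the definition of $\mathit{weakening}$, $\neg\mathit{weakening}(\mathit{min})$ deletes exactly the models whose true set contains the true set of some selected model.

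The first obstacle is to show that the deleted models are never stable. Proposition~\ref{theo:two-stb} alone does not give this, since it only compares stable models with each other. So I will prove the stronger fact that every stable model $I$ is $\leq_t$-minimal among \twomodels.
\begin{itemize}
\item Let $J\in\twomset{D}$ with $J\leq_t I$. Then $J$ is $\fval$ outside $S^I$.
\item Hence for $s\in S^I$ we have $\varphi_s^I(J|_{S^I})=\varphi_s(J)=J(s)$, so $J|_{S^I}$ is a \twomodel, and thus a complete interpretation, of $D^I$.
\item The grounded interpretation $G$ of $D^I$ is $\inorder$-below every complete interpretation, so $G\inorder J|_{S^I}$.
\item Stability gives $G(s)=\tval$ on all of $S^I$, hence $J(s)=\tval$ on $S^I$, and therefore $J=I$.
\end{itemize}
Thus phase one loses no stable model and terminates after at most $|S|+1$ iterations.

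\emph{Phase two.} Line~9 pairs each candidate $I$ with the \threeint $G_0$ given by $G_0(s)=\star$ if $I(s)=\tval$ and $G_0(s)=\fval$ otherwise. I would check that for $s\in S^I$, $(\varphi_s^\top(G),\varphi_s^\bot(G))$ computed in $D$ equals $\charoper$ of the reduct $D^I$ at $G|_{S^I}$. This holds because the arguments false in $I$ are fixed to $\fval$ in $G$, which is exactly the substitution defining $\varphi_s^I$. Under this identification, $G_0|_{S^I}$ is the all-$\star$ interpretation of $D^I$, and the loop performs a chaotic iteration that only promotes $\star$ to $\tval$ when the reduct operator yields $\tval$.

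The second, and main, obstacle is that the grounded interpretation may also assign $\fval$ to some arguments, and the loop never does this. I would argue both directions from the Kleene iteration $\Gamma_{D^I}^j(\star,\ldots,\star)$, which is $\inorder$-increasing and converges to $G$.
\begin{itemize}
\item \emph{Soundness.} By monotonicity of $\Gamma_{D^I}$ (and induction over the promotions), every value the loop sets to $\tval$ is $\inorder G$. So if the loop reaches all $\tval$ on $S^I$, then $G=\tval$ on $S^I$.
\item \emph{Completeness.} If $I$ is stable, then $G$ is $\tval$ on all of $S^I$. Every Kleene iterate is $\inorder G$, so no iterate ever assigns $\fval$ to an argument of $S^I$. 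Each Kleene step therefore consists only of $\star\to\tval$ promotions, which the loop also performs (applying them one argument at a time still reaches the same least fixed point, by monotonicity). Hence the loop's fixed point is $\tval$ on all of $S^I$.
\end{itemize}
Consequently, the final conjunction with $s\Rightarrow(s^\top\land\neg s^\bot)$, followed by existential projection of the dual variables, returns exactly those candidates $I$ that are stable. Termination of the loop is clear, since each pass can only increase the number of $\tval$ values.
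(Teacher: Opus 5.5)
Your proof is correct and has the same skeleton as the paper's: Lines 2--8 act as a $\leq_t$-minimality filter, Line 9 pairs each candidate with an interpretation of its reduct, then come the promotion loop and the final test and projection. Where it differs is that it replaces both of the paper's load-bearing assertions with actual arguments.

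For phase one, the paper leans on Proposition 3.8 of Strass. As stated, that result only says stable models are pairwise $\leq_t$-incomparable. It does not rule out a stable model lying strictly $\leq_t$-above a selected but non-stable 2-valued model, which is exactly what Lines 2--8 discard. Your reduct lemma (every stable model is $\leq_t$-minimal in $\twomset{D}$) is the fact the algorithm actually needs, and the paper does not prove it.

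For phase two, the paper simply says Lines 10--16 ``perform grounding''. You instead prove that the final all-$\tval$ test is exact. The paper's version buys brevity; yours buys a proof.

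You can also shorten your phase two, because your ``main obstacle'' never arises. Apply your own lemma's argument with $J = I$: the all-$\tval$ interpretation of $D^I$ is a 2-valued model of $D^I$, hence complete. So the grounded interpretation $G$ of $D^I$ lies $\inorder$-below it and never takes the value $\fval$ on $S^I$, for every candidate $I \in \twomset{D}$. Let $H^*$ be the loop's final state. Then $\Gamma_{D^I}(H^*) \inorder G$ contains no $\fval$. Termination forbids $\Gamma_{D^I}(H^*)(s)=\tval$ where $H^*(s)=\uval$, so $\Gamma_{D^I}(H^*) \inorder H^*$, and hence every Kleene iterate is $\inorder H^*$. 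Combined with your soundness bound $H^* \inorder G$, this gives $H^* = G$ unconditionally. That is exactly the paper's informal claim that the loop computes the grounded interpretation, and your case-split argument for the final test then becomes immediate.
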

\begin{proof}
	Lines 2-8 of Algorithm~3 filter out the $\leq_t$-minimal \twomodels, such that the process directly mirrors the one from Algorithm~2.
	Subsequently, Line 9 sets the dual encoding $(s^\top, s^\bot)$ based on the value of each argument in the current set of candidates.
	If $s$ is true, the dual encoding of $s$ is set to $\star$ (i.e. $(s^\top, s^\bot) = (1,1)$).
	If $s$ is false, it is set to $0$ (i.e. $(s^\top, s^\bot) = (0, 1)$).
	This turns the candidate set into a symbolically represented relation, associating each \twomodel with a \threeint of its reduced ADF (recall that in the reduced ADF, arguments are fixed to $0$ if they are $0$ in the \twomodel).
	
	On Lines 10-16, the algorithm performs grounding on the whole set of candidate interpretations at once.
	
	Line 12 selects all candidates where a argument $s$ is~$\star$ (i.e. $(s^\top, s^\bot) = (1, 1)$), but where $\charoper{(I)}(s) = 1$ (i.e. $(\varphi^\top_s, \varphi^\bot_s) = (1, 0)$).
	
	On Line 13, these candidates are then removed from the candidate set and replaced with ones where the dual encoding of $s$ is set to $1$ (i.e. $(s^\top, s^\bot) = (1, 0)$) (note that we only need to substitute $s^\bot$, since $s^\top$ is already guaranteed to be $1$).
	
	Once the grounding process is complete, Line 16 performs two important steps: First, an added constraint ensures that we only keep those \twomodels $I$ and their grounded counterparts $G$ where $I(s) = 1$ implies $G(s) = 1$. Subsequently, existential quantification projects the whole relation back to a set of \twomodels, which at this point are guaranteed to be stable.
\end{proof}

\end{document}